\documentclass[journal,table]{IEEEtran}

\usepackage{fixltx2e}
\usepackage{amsmath, amsthm, amssymb, bm}
\usepackage{epsfig}
\usepackage{amsfonts}
\usepackage{color}
\usepackage{enumerate}
\usepackage{relsize}
\usepackage[T4]{fontenc}
\usepackage[T1]{fontenc}
\usepackage{amsmath}
\usepackage{tikz,pgfplots}
\usepackage{caption}
\usepackage{stackengine}
\usepackage{rotating}
\usepackage{lipsum}
\usepackage{algorithmic}
\usepackage{algorithm}
\usepackage{afterpage}
\usepackage{cite}
\usepackage{epstopdf}
\usepackage{footnote}
\usepackage[bottom]{footmisc}
\usepackage{multicol}
\newtheorem{theorem}{Theorem}
\newtheorem{property}{Property}

\newtheorem{remark}{Remark}

\usepackage{subfigure}
\usepackage{chngcntr}
\counterwithout{figure}{section}
\counterwithout{figure}{subsection}
\usepackage{ mathrsfs }
\usepackage{dblfloatfix}

\newlength\figureheight
\newlength\figurewidth

\nointerlineskip

\setlength\unitlength{1mm}

\long\def\comment#1{}


\DeclareMathOperator*{\argmin}{arg\,min}

\newfont{\bbb}{msbm10 scaled 700}

\newfont{\bb}{msbm10 scaled 1100}


\newcommand{\bv}{{\bf b}}

\newcommand{\uv}{{\bf u}}

\newcommand{\vv}{{\bf v}}
\newcommand{\xv}{{\bf x}}
\newcommand{\yv}{{\bf y}}
\newcommand{\zv}{{\bf z}}


\newcommand{\Am}{{\bf A}}
\newcommand{\Bm}{{\bf B}}

\newcommand{\Fm}{{\bf F}}

\newcommand{\Hm}{{\bf H}}
\newcommand{\Id}{{\bf I}}

\newcommand{\Mm}{{\bf M}}

\newcommand{\Rm}{{\bf R}}

\newcommand{\Um}{{\bf U}}

\newcommand{\Vm}{{\bf V}}
\newcommand{\Xm}{{\bf X}}



\newcommand{\Sigmam}{\hbox{\boldmath$\Sigma$}}



\hyphenation{op-tical net-works semi-conduc-tor}

\begin{document}
%
\title{Perturbation-Based Regularization for Signal Estimation in Linear Discrete Ill-posed Problems}
%
%
%

\author{ Mohamed~Suliman,~\IEEEmembership{Student~Member, IEEE}, Tarig~Ballal,~\IEEEmembership{Member, IEEE}, and Tareq~Y.~Al-Naffouri,~\IEEEmembership{Member, IEEE}


\thanks{M. Suliman, T. Ballal, and T. Y. Al-Naffouri are with the Computer, Electrical, and Mathematical Sciences and Engineering (CEMSE) Division, King Abdullah University of Science and Technology (KAUST), Thuwal, Makkah Province, Saudi Arabia. e-mails:$\{$mohamed.suliman, tarig.ahmed, tareq.alnaffouri$\}$@kaust.edu.sa.}}

\maketitle

\begin{abstract}
Estimating the values of unknown parameters from corrupted measured data faces a lot of challenges in ill-posed problems. In such problems, many fundamental estimation methods fail to provide a meaningful stabilized solution. In this work, we propose a new regularization approach and a new regularization parameter selection approach for linear least-squares discrete ill-posed problems. The proposed approach is based on enhancing the singular-value structure of the ill-posed model matrix to acquire a better solution. Unlike many other regularization algorithms that seek to minimize the estimated data error, the proposed approach is developed to minimize the mean-squared error of the estimator which is the objective in many typical estimation scenarios. The performance of the proposed approach is demonstrated by applying it to a large set of real-world discrete ill-posed problems. Simulation results demonstrate that the proposed approach outperforms a set of benchmark regularization methods in most cases. In addition, the approach also enjoys the lowest runtime and offers the highest level of robustness amongst all the tested benchmark regularization methods.
\end{abstract}
\begin{IEEEkeywords}
Linear estimation, ill-posed problems, linear least squares, regularization.
\end{IEEEkeywords}
\section{Introduction}
\label{sec:intro}
We consider the standard problem of recovering an unknown signal $\xv_{0} \in \mathbb{R}^{n}$ from a vector $\yv \in \mathbb{R}^{m}$ of $m$ noisy, linear observations given by $\yv = \Am \xv_{0} + \zv$. Here, $\Am \in \mathbb{R}^{m\times n}$ is a known linear measurement matrix, and, $\zv \in \mathbb{R}^{m\times 1}$ is the noise vector; the latter is assumed to be additive white Gaussian noise (AWGN) vector with unknown variance $\sigma_{\zv}^{2}$ that is independent of $\xv_{0}$. Such problem has been extensively studied over the years due to its obvious practical importance as well as its theoretical interest \cite{kailath2000linear, poor2013introduction, groetsch1993inverse}. It arises in many fields of science and engineering, e.g., communication, signal processing, computer vision, control theory, and economics.

Over the past years, several mathematical tools have been developed for estimating the unknown vector $\xv_{0}$. The most prominent approach is the ordinary least-squares (OLS)\cite{kay2013fundamentals} that finds an estimate $\hat{\xv}_{\text{OLS}}$ of $\xv_{0}$ by minimizing the Euclidean norm of the residual error
\begin{equation}
\label{eq:ls problem}
\hat{\xv}_{\text{OLS}} = \argmin_{\xv}  || \yv - \Am \xv||^{2}_{2}.
\end{equation}
The behavior of the OLS has been extensively studied in the literature and it is now very well understood. In particular, if $\Am$ is a full column rank, (\ref{eq:ls problem}) has a unique solution given by  
\begin{eqnarray}
\label{eq:pure LS solution}
{\hat{\xv}}_{\text{OLS}}
&=& \left(\Am^{T}  \Am\right)^{-1}  \Am^{T} \yv =\Vm \Sigmam^{-1} \Um^{T} \yv,  
\end{eqnarray}
where $\Am = \Um \Sigmam \Vm^{T} = \sum_{i=1}^{n} \sigma_{i} \uv_{i} \vv_{i}^{T}$ is the singular value decomposition (SVD) of $\Am$, $\uv_{i}$ and $\vv_{i}$ are the left and the right orthogonal singular vectors, while the singular values $\sigma_{i}$ are assumed to satisfy $\sigma_{1} \ge \sigma_{1} \geq \dotsi \geq \sigma_{n}$. A major difficulty associated with the OLS approach is in discrete ill-posed problems. A problem is considered to be well-posed if its solution always exists, unique, and depends continuously on the initial data. Ill-posed problems fail to satisfy at least one of these conditions \cite{fischler2014readings}. In such problems, the matrix $\Am$ is ill-conditioned and the computed LS solution in (\ref{eq:pure LS solution}) is potentially very sensitive to perturbations in the data such as the noise $\zv$ \cite{kilmer2001choosing}.

Discrete ill-posed problems are of great practical interest in the field of signal processing and computer vision \cite{piotrowski2008mv, liu2008kernel, bertero1988ill, poggio1985computational}. They arise in a variety of applications such as computerized tomography \cite{natterer1986mathematics}, astronomy \cite{craig1986inverse}, image restoration and deblurring \cite{katsaggelos1991regularized, hansen2006deblurring}, edge detection \cite{torre1986edge}, seismography \cite{scales1988robust}, stereo matching \cite{scharstein2002taxonomy}, and the computation of lightness and surface reconstruction \cite{blanz2004statistical}. Interestingly, in all these applications and even more, data are gathered by convolution of a noisy signal with a detector \cite{aster2005parameter, hansen1993use}. A linear representation of such process is normally given by
\begin{equation}
\label{eq:kernal equation}
\int_{b_{1}}^{b_{2}} a\left(s,t\right) \xv_{0}\left(t\right) \text{dt} = \yv_{0}\left(s\right) + \zv\left(s\right) = \yv\left(s\right),
\end{equation} 
where $ \yv_{0}\left(s\right)$ is the true signal, while the kernal function $a\left(s,t\right)$ represents the response. It is shown in \cite{chen1993new} how a problem with a formulation similar to (\ref{eq:kernal equation}) fails to satisfy the well-posed conditions introduced above. The discretized version of (\ref{eq:kernal equation}) can be represented by $\yv = \Am \xv_{0} + \zv$.

To solve ill-posed problems, regularization methods are commonly used. These methods are based on introducing an additional prior information in the problem. All regularization methods are used to generate a reasonable solution for the ill-posed problem by replacing the problem with a well-posed one whose solution is acceptable. This must be done after careful analysis to the ill-posed problem in terms of its physical plausibility and its mathematical properties.

Several regularization approaches have been proposed throughout the years. Among them are the truncated SVD \cite{varah1983pitfalls}, the maximum entropy principle \cite{smith2013maximum}, the hybrid methods \cite{hanke1993regularization}, the covariance shaping LS estimator \cite{eldar2003covariance}, and the weighted LS \cite{eldar2007improvement}. The most common and widely used approach is the regularized M-estimator that obtains an estimate $\hat{\xv}$ of $\xv_{0}$ from $\yv$ by solving the convex problem
\begin{equation}
\label{eq:m regularization}
\hat{\xv} := \argmin_{\xv} \mathcal{L}\left(\yv - \Am \xv \right) + \gamma f\left(\xv\right),
\end{equation} 
where the loss function $\mathcal{L}:\mathbb{R}^{m} \to \mathbb{R} $ measures the fit of $\Am\xv$ to the observation vector $\yv$, the penalty function $f:\mathbb{R}^{m} \to \mathbb{R}$ establishes the structure of $\xv$, and $\gamma$ provides a balance between the two functions. Different choices of $\mathcal{L}$ and $f$ leads to different estimators. The most popular among them is the Tikhonov regularization which is given in its simplified form by \cite{tikhonov1977methods}
\begin{equation}
\label{eq:tik-minimization}
\hat{\xv}_{\text{RLS}}  := \argmin_{\xv}  \ ||\yv - \Am \xv ||_2^{2} + \gamma \  ||\xv ||_2^{2}.
\end{equation}
The solution to (\ref{eq:tik-minimization}) is given by the regularized least-square (RLS) estimator
\begin{equation}
\label{eq:R-LS} 
\hat{\xv}_{\text{RLS}} = \left(\Am^{T}\Am+\gamma\Id_{n}\right)^{-1}\Am^{T}\yv,
\end{equation}
where $\Id_{n}$ is ($n \times n $) identity matrix. In general, $\gamma$ is unknown and has to be chosen judiciously.  

On the other hand, several regularization parameter selection methods have been proposed to find the regularization parameter in regularization methods. These include the \emph{generalized cross validation} (GCV) \cite{wahba1990spline}, the \emph{L-curve} \cite{hansen1992analysis, hansen2007adaptive}, and the \emph{quasi-optimal} method \cite{morozov2012methods}. The GCV obtains the regularizer by minimizing the GCV function which suffers from the shortcoming that it may have a very flat minimum that makes it very challenging to be located numerically. The L-curve, on the other hand, is a graphical tool to obtain the regularization parameter which has a very high computational complexity. Finally, the quasi-optimal criterion chooses the regularization parameter without taking into account the noise level. In general, the performance of these methods varies significantly depending on the nature of the problem\footnote{The work presented in this paper is an extended version of \cite{msulimanhybrid}.}. 
\vspace{-10pt}
\subsection{Paper Contributions}
\begin{enumerate}
\item \emph{New regularization approach}: We propose a new approach for linear discrete ill-posed problems that is based on adding an artificial perturbation matrix with a bounded norm to $\Am$. The objective of this artificial perturbation is to improve the challenging singular-value structure of $\Am$. This perturbation affects the fidelity of the model $\yv = \Am \xv_{0} + \zv$, and as a result, the equality relation becomes invalid. We show that using such modification provides a solution with better numerical stability.
\item \emph{New regularization parameter selection method}: We develop a new regularization parameter selection approach that selects the regularizer in a way that minimizes the mean-squared error (MSE) between $\xv_{0}$ and its estimate $\hat{\xv}$, $\mathbb{E} \ ||\hat{\xv} - \xv_{0}||_{2}^{2} $. \footnote{Less work have been done in the literature to provide estimators that are based on the MSE as for example in \cite{eldar2005robust} where the authors derived an estimator for the linear model problem that is based on minimizing the \emph{worst-case} MSE (as opposed to the actual MSE) while imposing a constraint on the unknown vector $\xv_{0}$.}
\item \emph{Generality}: A key feature of the approach is that it does not impose any prior assumptions on $\xv_{0}$. The vector $\xv_{0}$ can be deterministic or stochastic, and in the later case we do not assume any prior statistical knowledge about it. Moreover, we assume that the noise variance $\sigma_{\zv}^{2}$ is unknown. Finally, the approach can be applied to a large number of linear discrete ill-posed problems.
\end{enumerate}
\vspace{-10pt}
\subsection{Paper Organization} 
This paper is organized as follows. Section~\ref{sec:COPRA} presents the formulation of the problem and derive its solution. In Section~\ref{sec:MSE}, we derive the artificial perturbation bound that minimizes the MSE. Further, we derive the proposed approach characteristic equation for obtaining the regularization parameter. Section~\ref{sec:Properties} studies the properties of the approach characteristic equation while Section~\ref{sec:Results} presents the performance of the proposed approach using simulations. Finally, we conclude our work in Section~\ref{sec:conclusion}.
\vspace{-10pt} 
\subsection{Notations} 
Matrices are given in boldface upper case letters (e.g., $\Xm$), column vectors are represented by boldface lower case letters (e.g., $\xv$), and $\left(.\right)^{T}$ stands for the transpose operator. Further, $\mathbb{E}\left(.\right)$, $\Id_{n}$, and $\bm{0}$ denote the expectation operator, the $\left(n \times n \right)$ identity matrix, and the zero matrix, respectively. Notation $||.||_{2}$ refers to the spectral norm for matrices and Euclidean norm for vectors. The operator $\text{diag}\left(.\right)$ returns a vector that contains the diagonal elements of a matrix, and a diagonal matrix if it operates on a vector where the diagonal entries of the matrix are the elements of the vector.
\vspace{-6pt}
\section{Proposed Regularization Approach}
\label{sec:COPRA}
\subsection{Background}
\label{subsec:background}
We consider the linear discrete ill-posed problems in the form $\yv = \Am \xv_{0} + \zv$ and we focus mainly on the case where $m\geq n$ without imposing any assumptions on $\xv_{0}$. The matrix $\Am$ in such problems is ill-conditioned that has a very fast singular values decay \cite{roy2001inverse}. A comparison between the singular values decay of the full rank matrices, the rank deficient matrices, and the ill-posed problems matrices is given in Fig.~\ref{fig:sv decay}. 

From Fig.~\ref{fig:sv decay}, we observe that the singular values of the full column rank matrix are decaying constantly while in the rank definite matrix there is a jump (gap) between the nonzero and the zero singular values. Finally, the singular values of the ill-posed problem matrix are decaying very fast, without a gap, to a significantly small positive number.
\normalsize
\begin{figure}[h!]
    \centering
 \centerline{\includegraphics[width= 3.5in]{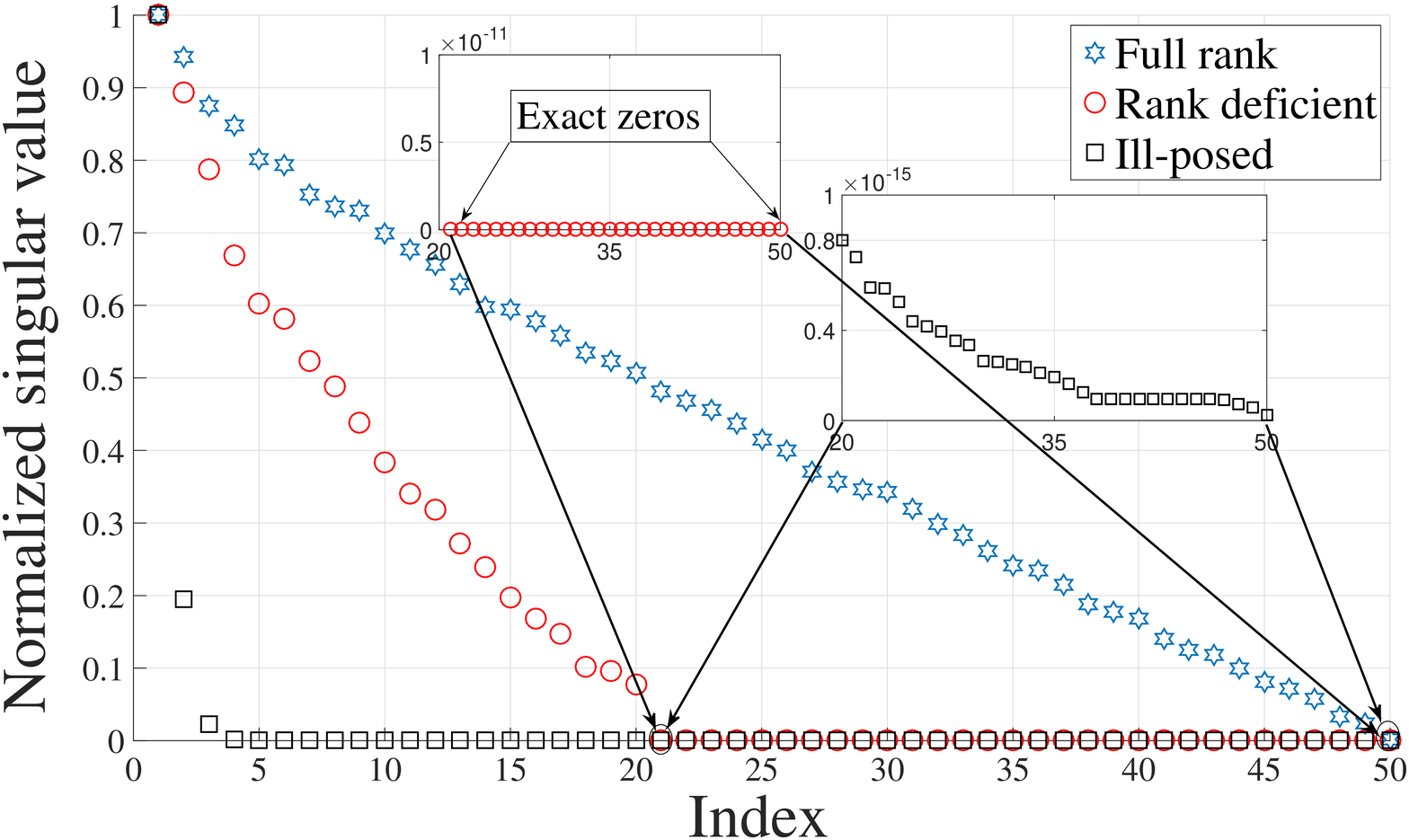}}
\caption{Different singular values decay for different matrices, $\Am \in \mathbb{R}^{50 \times 50}$.}
\label{fig:sv decay}
\end{figure}
\vspace{-10pt}
\subsection{Problem Formulation}
\label{subsec:analysis}
Let us start by considering the LS solution in (\ref{eq:pure LS solution}). In many ill-posed problems, and due to the singular-value structure of $\Am$ and the interaction that it has with the noise, equation~(\ref{eq:pure LS solution}) is not capable of producing a sensible estimate of $\xv_{0}$. Herein, we propose adding an artificial perturbation $\Delta \Am \in \mathbb{R}^{m\times n}$ to $\Am$. We assume that this perturbation, which replaces $\Am$ by $\left(\Am+\Delta \Am\right)$, improves the challenging singular-value structure, and hence is capable of producing a better estimate of $\xv_{0}$. In other words, we assume that using $\left(\Am+\Delta \Am\right)$ in estimating $\xv_{0}$ from $\yv$ provides a better estimation result than using $\Am$. Finally, to provide a balance between improving the singular-value structure and maintaining the fidelity of the basic linear model, we add the constraint $||\Delta \Am||_2 \leq \delta$, $\delta \in \mathbb{R}^{+}$. Therefore, the basic linear model is modified to
\begin{equation}
\label{eq:BDU equation}
\yv \approx \left(\Am+\Delta \Am\right)\xv_{0} + \zv; \  ||\Delta \Am||_2 \leq \delta.
\end{equation}
The question now is what is the best $\Delta \Am$ and the bound on this perturbation. It is clear that these values are important since they affect the model's fidelity and dictate the quality of the estimator. This question is addressed further ahead in this section. For now, let us start by assuming that $\delta$ is known\footnote{We will use this assumption to obtain the proposed estimator solution as a function of $\delta$, then, we will address the problem of obtaining the value of $\delta$.}.

Before proceeding, it worth mentioning that the model in (\ref{eq:BDU equation}) has been considered for signal estimation in the presence of data errors but with strict equality (e.g., \cite{eldar2005robust, el1997robust, chandrasekaran1998parameter}). These methods assume that $\Am$ is not known perfectly due to some error contamination, and that a prior knowledge on the real error bound (which corresponds to $\delta$ in our case) is available. However, in our case the matrix $\Am$ is known perfectly, whereas $\delta$ is unknown.

To obtain an estimate of $\xv_{0}$, we consider minimizing the worst-case residual function of the new perturbed model in (\ref{eq:BDU equation}) which is given by
\begin{eqnarray}
\label{eq:worst-error} 
&\underset{\xv}{\operatorname{\min}} \  \underset{||\Delta \Am||_2 \leq \delta}{\operatorname{\max}}_{} \ Q\left(\xv,\Delta \Am\right) := \ ||\yv - \left(\Am  + \Delta \Am\right) \xv ||_2. 
\end{eqnarray}
\begin{theorem}
The unique minimzer $\hat{\xv}$ of the min-max constrained problem in (\ref{eq:worst-error}) for fixed $\delta > 0$ is given by
\begin{equation}
\label{eq:copra solution} 
\hat{\xv} = \left(\Am^{T}\Am+\rho\left(\delta,\hat{\xv}\right)\Id_{n}\right)^{-1}\Am^{T}\yv,
\end{equation}
where $\rho\left(\delta,\hat{\xv}\right)$ is a regularization parameter that is related to the perturbation bound $\delta$ through
\begin{equation}
\label{eq:sec1}
\rho\left(\delta,\hat{\xv}\right) = \delta \ \frac{ || \yv- \Am \hat{\xv}  ||_2 }{|| \hat{\xv} ||_2}.
\end{equation}
\end{theorem}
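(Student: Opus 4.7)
My plan is to decouple the saddle problem in (\ref{eq:worst-error}) by first carrying out the inner maximization over $\Delta \Am$ for a fixed $\xv$ in closed form, and then minimizing the resulting objective in $\xv$. The inner step collapses the matrix perturbation to a scalar penalty $\delta\|\xv\|_{2}$, and the outer step reduces to an unconstrained convex optimization whose first-order optimality condition will reproduce (\ref{eq:copra solution})--(\ref{eq:sec1}).

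\textbf{Inner maximization.} By the triangle inequality and submultiplicativity of the spectral norm,
\begin{equation*}
\|\yv-(\Am+\Delta\Am)\xv\|_{2}\le\|\yv-\Am\xv\|_{2}+\|\Delta\Am\|_{2}\|\xv\|_{2}\le\|\yv-\Am\xv\|_{2}+\delta\|\xv\|_{2}.
\end{equation*}
Writing $\rv:=\yv-\Am\xv$, I then exhibit a feasible perturbation that attains the bound: the rank-one choice
\begin{equation*}
\Delta\Am^{\star}=-\delta\,\frac{\rv\,\xv^{T}}{\|\rv\|_{2}\,\|\xv\|_{2}}
\end{equation*}
has spectral norm exactly $\delta$ (since $\|\uv\vv^{T}\|_{2}=\|\uv\|_{2}\|\vv\|_{2}$), and produces $\Delta\Am^{\star}\xv=-\delta\|\xv\|_{2}\,\rv/\|\rv\|_{2}$, which is anti-aligned with $\rv$ and therefore makes the triangle inequality an equality. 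The degenerate configurations $\xv=\zerov$ and $\rv=\zerov$ are consistent with the same closed-form penalty and handled by inspection. Hence, for every $\xv$,
\begin{equation*}
\max_{\|\Delta\Am\|_{2}\le\delta}\|\yv-(\Am+\Delta\Am)\xv\|_{2}=\|\yv-\Am\xv\|_{2}+\delta\|\xv\|_{2}.
\end{equation*}

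\textbf{Outer minimization.} The problem reduces to
\begin{equation*}
\hat{\xv}=\argmin_{\xv}\bigl[\|\yv-\Am\xv\|_{2}+\delta\|\xv\|_{2}\bigr],
\end{equation*}
a sum of two convex, coercive functions. At any minimizer with $\hat{\xv}\ne\zerov$ and $\yv\ne\Am\hat{\xv}$ both norms are differentiable, and setting the gradient to zero gives
\begin{equation*}
-\frac{\Am^{T}(\yv-\Am\hat{\xv})}{\|\yv-\Am\hat{\xv}\|_{2}}+\delta\,\frac{\hat{\xv}}{\|\hat{\xv}\|_{2}}=\zerov.
\end{equation*}
Multiplying through by $\|\yv-\Am\hat{\xv}\|_{2}$ and collecting the coefficient of $\hat{\xv}$ identifies $\rho(\delta,\hat{\xv})=\delta\|\yv-\Am\hat{\xv}\|_{2}/\|\hat{\xv}\|_{2}$, yielding $(\Am^{T}\Am+\rho\,\Id_{n})\hat{\xv}=\Am^{T}\yv$, which is (\ref{eq:copra solution}) after inverting the positive-definite matrix on the left.

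\textbf{Main obstacle.} The delicate point is uniqueness, since $\|\yv-\Am\xv\|_{2}+\delta\|\xv\|_{2}$ is convex but not strictly convex in general (the Euclidean norm is linear along rays). I would establish uniqueness by noting that a non-trivial minimizing segment between two candidates $\xv_{1}\neq\xv_{2}$ would force $\yv-\Am\xv_{1}$ and $\yv-\Am\xv_{2}$ to be nonnegative multiples of a common direction and $\xv_{1},\xv_{2}$ to be nonnegative multiples of a common direction simultaneously; combining both collinearity constraints with the full-column-rank assumption on $\Am$ collapses the two candidates to a single point. The edge cases $\hat{\xv}=\zerov$ and $\yv=\Am\hat{\xv}$ would be ruled out via the subdifferentials of the two norms, confirming that in the non-degenerate regime (\ref{eq:copra solution})--(\ref{eq:sec1}) delivers the unique global minimum.
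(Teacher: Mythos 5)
Your proof follows essentially the same route as the paper: the triangle/Minkowski inequality gives the upper bound $\|\yv-\Am\xv\|_2+\delta\|\xv\|_2$, the same rank-one perturbation (yours is literally the paper's $\Delta\Am$ written with the opposite sign convention for the residual) shows the bound is attained, and the stationarity condition of the resulting convex objective yields (\ref{eq:copra solution})--(\ref{eq:sec1}). The only difference is that you explicitly flag and sketch the uniqueness argument for the non-strictly-convex outer problem, which the paper passes over with only a "convex, hence local = global" remark; that is a welcome refinement of the same approach rather than a different one.
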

\begin{proof}
By using Minkowski inequality \cite{maligranda1995simple}, we find an upper bound for the cost function $Q\left(\xv,\Delta \Am\right)$ in (\ref{eq:worst-error}) as
\begin{align}
\label{eq:bound minko}
||\yv - \left(\Am  + \Delta \Am\right) \xv ||_2 & \leq  ||\yv - \Am  \xv ||_2 +  || \Delta \Am  \ \xv  ||_{2} \nonumber\\
&\leq  ||\yv - \Am  \xv ||_2 + || \Delta \Am ||_{2} || \xv ||_{2} \nonumber\\
&\leq ||\yv - \Am  \xv ||_2 + \delta \ || \xv ||_{2}.
\end{align}
However, upon setting $\Delta \Am$ to be the following rank one matrix
\begin{equation}
\label{eq:perturbtion upper bound}
\Delta \Am = \frac{\left( \Am  \xv -\yv \right) }{ ||\yv - \Am  \xv ||_2} \frac{\xv^{T}}{|| \xv ||_{2}} \delta,
\end{equation}
we can show that the bound in (\ref{eq:bound minko}) is achievable by
\begin{align}
\label{eq:proof of upper bound}
|| \yv - \left(\Am  + \Delta \Am\right) \xv ||_2  & = ||\left(\yv -\Am \xv \right)  +  \frac{\left(\yv - \Am \xv \right)}{||\yv - \Am  \xv  ||_2} \frac{\xv^{T} }{|| \xv ||_{2}}  \xv \delta ||_2 \nonumber\\
 & = ||\left(\yv -\Am \xv \right)  +  \frac{\left(\yv - \Am \xv \right)}{|| \yv- \Am  \xv ||_2} || \xv ||_{2}  \delta||_2. 
\end{align}
Since the two added vectors $\left(\yv -\Am \xv \right)$ and $\frac{\left(\yv - \Am \xv \right)}{|| \yv- \Am  \xv ||_2} || \xv ||_{2}  \delta $ in (\ref{eq:proof of upper bound}) are positively linearly dependent (i.e., pointing in the same direction), we conclude that
\begin{align}
\label{eq:proof of upper bound2}
||\left( \yv  - \Am \xv\right) +  \frac{\left(\yv - \Am \xv \right)}{||  \yv - \Am  \xv ||_2} || \xv ||_{2} \ \delta ||_2  = \underbrace{|| \yv - \Am \xv||_2 + \delta||\xv ||_2}_{W\left(\xv\right)}
\end{align}
As a result, (\ref{eq:worst-error}) can be expressed equivalently by
\begin{equation}
\label{eq:costfunction}
\underset{\xv}{\operatorname{\min}} \  \underset{||\Delta \Am||_2 \leq \delta}{\operatorname{\max}}_{} \ Q\left( \xv, \Delta \Am\right) \equiv \underset{\xv}{\operatorname{\min}} \  W\left(\xv\right).
\end{equation}
Therefore, the solution of (\ref{eq:worst-error}) depends only on $\delta$ and is agnostic to the structure of $\Delta \Am$\footnote{Interestingly, setting the norm of the penalty term in $W\left(\xv\right)$ to be of l1 norm (i.e., $||\xv ||_{1}$) leads to the square-root LASSO \cite{belloni2011square} which is used in sparse signal estimation.}. It is easy to check that the solution space for $W\left(\xv\right)$ is convex in $\xv$, and hence, any local minimum is also a global minimum. But at any local minimum, it either holds that the gradient of $W\left(\xv\right)$ is zero, or $W\left(\xv\right)$ is not differentiable. More precisely, $W\left(\xv\right)$ is not differentiable only at $\xv$ = 0 and when $\yv - \Am \xv= 0$. However, the former case is a trivial case that is not being considered in this paper, while the latter case is not possible by definition. The gradient of $W\left(\xv\right)$ can be obtained as
\begin{align}
\label{eq:gradientI}
&\nabla_{\xv}  W\left(\xv\right)
=  \frac{1}{|| \yv- \Am \xv  ||_2} \Am^{T} \left(\Am \xv - \yv \right) +  \frac{\delta \ \xv}{|| \xv ||_2} \nonumber\\ 
&=  \frac{1}{|| \yv- \Am \xv  ||_2} \left( \Am^{T} \Am \xv  +  \frac{\delta   \ || \yv- \Am \xv  ||_2\ \xv }{|| \xv ||_2}  - \Am^{T}\yv \right).
\end{align}
Solving for $\nabla_{\xv}  W\left(\hat{\xv}\right) = 0$ upon defining $\rho\left(\delta,\hat{\xv}\right)$ as in (\ref{eq:sec1}), we obtain (\ref{eq:copra solution}).
\end{proof}
\begin{remark}
\normalfont
The regularization parameter $\rho$ in (\ref{eq:sec1}) is a function of the unknown estimate $\hat{\xv}$, as well as the upper perturbation bound $\delta$ (we have dropped the dependence of $\rho$ on $\delta$ and $\hat{\xv}$ to simplify notation). In addition, it is clear from (\ref{eq:proof of upper bound2}) that $\delta$ controls the weight given to the minimization of the side constraint relative to the minimization of the residual norm. We have assumed that $\delta$ is known to obtain the min-max optimization solution. However, this assumption is not valid in reality. Thus, is it impossible to obtain $\rho$ directly from (\ref{eq:sec1}) given that both $\delta$ and $\hat{\xv}$ are unknowns.
\end{remark}
Now, it is obvious with (\ref{eq:copra solution}) and (\ref{eq:sec1}) in hand, we can eliminate the dependency of $\rho$ on $\hat{\xv}$. By substituting (\ref{eq:copra solution}) in (\ref{eq:sec1}) we obtain after some algebraic manipulations
\begin{align}
\label{eq:secular equation}
&\delta^{2} \Big[\yv^{T}\yv - 2\yv^{T} \Am\left(\Am^{T}\Am+\rho\Id_{n}\right)^{-1}\Am^{T}\yv \nonumber\\
& + ||\Am\left(\Am^{T}\Am+\rho\Id_{n}\right)^{-1}\Am^{T}\yv||^{2} \Big]  \nonumber\\
&= \rho^{2} \yv^{T}\Am\left(\Am^{T}\Am+\rho\Id_{n}\right)^{-2}\Am^{T}\yv.
\end{align}
In this following subsection, we will utilize and simplify (\ref{eq:secular equation}) using some manipulations to obtain $\delta$ that corresponds to an optimal choice of $\rho$ in ill-posed problems.
\vspace{-10pt}
\subsection{Finding the Optimal Perturbation Bound}
\label{choosing optimal bound}
Let us denote the optimal choices of $\rho$ and $\delta$ by $\rho_{\text{o}}$ and $\delta_{\text{o}}$, respectively. To simplify (\ref{eq:secular equation}), we substitute the SVD of $\Am$, then we solve for $\delta^{2}$, and finally we take the trace $\text{Tr}\left(.\right)$ of the two sides considering the evaluation point to be $\left(\delta_{\text{o}},\rho_{\text{o}}\right)$ to get
\begin{align}
\label{eq:secular equation trace}
&\underbrace{\delta_{\text{o}}^2 \ \text{Tr}\left( \left(\Sigmam^2 + \rho_{\text{o}} \Id_{n} \right)^{-2} \Um^{T} \left(\yv \yv^{T}\right) \Um  \right)}_{D\left(\rho_{\text{o}}\right)} \nonumber\\
&=
\underbrace{\text{Tr}\left( \Sigmam^2 \left(\Sigmam^2 + \rho_{\text{o}} \Id_{n} \right)^{-2} \Um^{T} \left(\yv \yv^{T}\right) \Um  \right)}_{N\left(\rho_{\text{o}}\right)}.
\end{align}
In order to obtain a useful expression, let us think of $\delta_{\text{o}}$ as a single universal value that is computed over many realizations of the observation vector $\yv$. Based on this perception, $\yv\yv^{T}$ can be replaced by its expected value $\mathbb{E}(\yv \yv^{T})$. In other words, we are looking for an optimal choice of $\delta$, say $\delta_{\text{o}}$, that is optimal for all realizations of $\yv$. At this point, we assume that such value exists. Then this parameter $\delta_{\text{o}}$ is clearly deterministic. If we sum (\ref{eq:secular equation trace}) for all realizations of $\yv$, and a fixed $\delta_{\text{o}}$, we end replacing $\yv\yv^{T}$ with $\mathbb{E}(\yv \yv^{T})$ which can be expressed using our basic linear model as
\begin{align}
\label{•eq:yy'}
\mathbb{E} \left(\yv \yv^{T} \right)
& =
\Am \Rm_{\xv_{0}} \Am^{T} + \sigma_{\zv}^2 \Id_{m} \nonumber\\
&=
\Um \Sigmam \Vm^{T}\Rm_{\xv_{0}} \Vm \Sigmam \Um^{T}+ \sigma_{\zv}^2 \Id_{m},
\end{align}
where $\Rm_{\xv_{0}} \triangleq \mathbb{E}\left(\xv_{0} \xv_{0}^{T} \right)$ is the covariance matrix of $\xv_{0}$. For a deterministic $\xv_{0}$, $\Rm_{\xv_{0}} = \xv_{0}\xv_{0}^{T}$ is used for notational simplicity.
Substituting (\ref{•eq:yy'}) in both terms of (\ref{eq:secular equation trace}) results
\begin{align}
\label{eq:N term}
N\left(\rho_{\text{o}}\right)&= 
\text{Tr}\left( \Sigmam^2 \left(\Sigmam^2 + \rho_{\text{o}} \Id_{n} \right)^{-2}\Sigmam^{2} \Vm^{T} \Rm_{\xv_{0}} \Vm  \right)\nonumber\\
&+\sigma_{\zv}^{2} \text{Tr}\left(\Sigmam^{2}\left(\Sigmam^{2}+\rho_{\text{o}}\Id_{n}\right)^{-2}\right),
\end{align}
and
\begin{align}
\label{eq:D term}
D\left(\rho_{\text{o}}\right)&= \delta_{\text{o}}^{2}\Big[\text{Tr}\left( \left(\Sigmam^2 + \rho_{\text{o}} \Id_{n} \right)^{-2}\Sigmam^{2} \Vm^{T} \Rm_{\xv_{0}} \Vm  \right) \nonumber\\ 
&+\sigma_{\zv}^{2}\text{Tr}\left(\left(\Sigmam^{2}+\rho_{\text{o}}\Id_{n}\right)^{-2}\right)\Big].
\end{align}
Considering the singular-value structure for the ill-posed problems, we can divide the singular values into two groups of \emph{significant}, or relatively large, and \emph{trivial}, or nearly zero singular value\footnote{This includes the special case when all the singular values are significant and so all are considered.}. As an example, we can see from Fig.~\ref{fig:sv decay} that the singular values of the ill-posed problem matrix are decaying very fast, making it possible to identify the two groups. Based on this, the matrix $\Sigmam$ can be divided into two diagonal sub-matrices, $\Sigmam_{n_{1}}$, which contains the first (significant) $n_{1}$ diagonal entries, and $\Sigmam_{n_{2}}$, which contains the last (trivial) $n_2 = n - n_1$ diagonal entries\footnote{The splitting threshold is obtained as the mean of the eigenvalues multiplied by a certain constant $c$, where c $\in \left(0,1\right)$.}. As a result, $\Sigmam$ can be written as
\begin{equation}
\label{eq:sigma parti}
\Sigmam = 
\begin{bmatrix}

 \Sigmam_{n_{1}}  & \bm{0} \\ 
 \bm{0}& \Sigmam_{n_{2}}
 
\end{bmatrix}.
\end{equation}
Similarly, we can partition $\Vm$ as  $\Vm = [\Vm_{n_{1}}  \  \Vm_{n_{2}}]$ where $\Vm_{n_{1}}\in \mathbb{R}^{n\times n_{1}}$, and $\Vm_{n_{2}}\in \mathbb{R}^{n\times n_{2}}$. Now, we can write $N\left(\rho_{\text{o}}\right)$ in (\ref{eq:N term}) in terms of the partitioned $\Sigmam$ and $\Vm$ as
\begin{align}
\label{eq:numer eqaution 1}
N\left(\rho_{\text{o}}\right)
&=
\text{Tr} \left(\Sigmam_{n_{1}}^2 \left(\Sigmam_{n_{1}}^2 + \rho_{\text{o}} \Id_{n_{1}}  \right)^{-2} \Sigmam_{n_{1}}^2 \Vm_{n_{1}}^{T} \Rm_{\xv_{0}}\Vm_{n_{1}}\right) \nonumber\\
&+
\text{Tr} \left(\Sigmam_{n_{2}}^2 \left(\Sigmam_{n_{2}}^2 + \rho_{\text{o}} \Id_{n_{2}}  \right)^{-2} \Sigmam_{n_{2}}^2 \Vm_{n_{2}}^{T} \Rm_{\xv_{0}}\Vm_{n_{2}}\right)\nonumber\\
&+\sigma_{\zv}^2 \text{Tr}\left(\Sigmam_{n_{1}}^2 \left(\Sigmam_{n_{1}}^2 + \rho_{\text{o}} \Id_{n_{1}} \right)^{-2} \right)\nonumber\\
&+
\sigma_{\zv}^2 \text{Tr}\left(\Sigmam_{n_{2}}^2 \left(\Sigmam_{n_{2}}^2 + \rho_{\text{o}} \Id_{n_{2}} \right)^{-2} \right). 
\end{align}
Given that $\Sigmam_{n_{1}}$ contains the significant singular values and $\Sigmam_{n_{2}}$ contains the relatively small (nearly zero) singular values, we have $\lVert \Sigmam_{n_{2}} \rVert \approx 0$, and so we can approximate $N\left(\rho_{\text{o}}\right)$ as
\begin{align}
\label{eq:numer eqaution 2}
N\left(\rho_{\text{o}}\right)
&\approx
\text{Tr} \left(\Sigmam_{n_{1}}^2 \left(\Sigmam_{n_{1}}^2 + \rho_{\text{o}} \Id_{n_{1}}  \right)^{-2} \Sigmam_{n_{1}}^2 \Vm_{n_{1}}^{T} \Rm_{\xv_{0}}\Vm_{n_{1}}\right)\nonumber\\
&+
\sigma_{\zv}^2 \text{Tr}\left(\Sigmam_{n_{1}}^2 \left(\Sigmam_{n_{1}}^2 + \rho_{\text{o}} \Id_{n_{1}} \right)^{-2} \right) .
\end{align}
Similarly, $D\left(\rho_{\text{o}}\right)$ in (\ref{eq:D term}) can be approximated equivalently as
\begin{align}
\label{eq:denumer approximation}
&D \left(\rho_{\text{o}}\right) \approx
\sigma_{\zv}^2 \  \text{Tr}\left(\left(\Sigmam_{n_{1}}^2 + \rho_{\text{o}} \Id_{n_{1}} \right)^{-2} \right) + \frac{ n_{2} \sigma_{\zv}^2}{\rho_{\text{o}}^{2}} \nonumber\\
&+  \text{Tr} \left( \left(\Sigmam_{n_{1}}^{2} + \rho_{\text{o}} \Id_{n_{1}}  \right)^{-2} \Sigmam_{n_{1}}^{2} \Vm_{n_{1}}^{T} \Rm_{\xv_{0}}\Vm_{n_{1}}\right).
\end{align}
Substituting \eqref{eq:numer eqaution 2} and \eqref{eq:denumer approximation} in (\ref{eq:secular equation trace}) and manipulating, we obtain 
\begin{align}
\label{eq:eta min 2}
\delta_{\text{o}}^2 &\approx
\Big[\sigma_{\zv}^2 \text{Tr}\left(\Sigmam_{n_{1}}^2 \left(\Sigmam_{n_{1}}^2 + \rho_{\text{o}} \Id_{n_{1}} \right)^{-2} \right)\nonumber\\
&+
\text{Tr} \left(\Sigmam_{n_{1}}^2 \left(\Sigmam_{n_{1}}^2 + \rho_{\text{o}} \Id_{n_{1}}  \right)^{-2} \Sigmam_{n_{1}}^2 \Vm_{n_{1}}^{T} \Rm_{\xv_{0}}\Vm_{n_{1}}\right)\Big] \Big/ \nonumber\\
&\Big[\sigma_{\zv}^2 \text{Tr}\left( \left(\Sigmam_{n_{1}}^2 + \rho_{\text{o}} \Id_{n_{1}} \right)^{-2} \right)
+ \frac{n_{2} \sigma_{\zv}^{2}
}{\rho_{\text{o}}^{2}} \nonumber\\
&+ \text{Tr} \left( \left(\Sigmam_{n_{1}}^{2} + \rho_{\text{o}} \Id_{n_{1}}  \right)^{-2} \Sigmam_{n_{1}}^{2} \Vm_{n_{1}}^{T} \Rm_{\xv_{0}}\Vm_{n_{1}}\right)\Big].
\end{align}
The bound $\delta_{\text{o}}$ in (\ref{eq:eta min 2}) is a function of the unknowns $\rho_{\text{o}}$, $\sigma_{\zv}^{2}$, and $\Rm_{\xv_{0}}$. In fact, estimating $\sigma_{\zv}^{2}$ and $\Rm_{\xv_{0}}$ without any prior knowledge is a very tedious process. The problem becomes worse when $\xv_{0}$ is deterministic. In such case, the exact value of $\xv_{0}$ is required to obtain $\Rm_{\xv_{0}} = \xv_{0}\xv_{0}^{T}$. In the following section, we will use the MSE as a criterion to eliminate the dependence of $\delta_{\text{o}}$ on these unknowns and a result to set the value of the perturbation bound.
\section{Minimizing the MSE for the solution of the proposed perturbation approach}
\label{sec:MSE}
The MSE for an estimate $\hat{\xv}$ of $\xv_{0}$ is given by
\begin{equation}
\label{eq:MSE}
\text{MSE} =\mathbb{E}\big[ ||\hat{\xv} - \xv_{0}||^{2}  \big] =\text{Tr}\left( \mathbb{E}\left( (\hat{\xv} - \xv_{0}) (\hat{\xv} - \xv_{0})^{T}  \right) \right).
\end{equation}
Since the solution of the proposed approach problem in (\ref{eq:worst-error}) is given by (\ref{eq:copra solution}), we can substitute for $\hat{\xv}$ from (\ref{eq:copra solution}) in (\ref{eq:MSE}) and then use the SVD of $\Am$ to obtain
\begin{align}
\label{eq:MSE2}
\text{MSE}\left(\rho\right)
&=
\sigma_{\zv}^{2} \text{Tr}\left(\Sigmam^{2} \left(\Sigmam^{2} + \rho\Id_{n} \right)^{-2} \right) \nonumber \\
&+
\rho^{2} \text{Tr}\left( \left(\Sigmam^2 + \rho\Id_{n} \right)^{-2}\Vm^{T}\Rm_{\xv_{0}}\Vm \right).
\end{align}
\begin{theorem}
For $\sigma_{\zv}^{2} >0$, the approximate value for the optimal regularizer $\rho_{\text{o}}$ of (\ref{eq:MSE2}) that approximately minimizes the MSE is given by
\begin{equation}
\label{eq:gamma min approx}
\rho_{\text{o}}\approx \frac{ \sigma_{\zv}^2}{\text{Tr}\left(\Rm_{\xv_{0}}\right) /n}.
\end{equation}
\end{theorem}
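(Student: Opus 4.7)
The plan is to attack this by straightforward first-order optimization: differentiate the MSE expression in (\ref{eq:MSE2}) with respect to $\rho$, set the derivative to zero, solve the resulting stationarity condition, and then apply a single carefully chosen approximation to collapse the answer to the closed-form in (\ref{eq:gamma min approx}).

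First I would compute $\frac{d}{d\rho}\mathrm{MSE}(\rho)$ term by term. Since every matrix in sight commutes (they are all functions of the diagonal $\Sigmam$, with $\Vm^{T}\Rm_{\xv_{0}}\Vm$ simply sandwiched inside the trace), the derivatives are scalar-like. The noise term gives $-2\sigma_{\zv}^{2}\,\mathrm{Tr}\bigl(\Sigmam^{2}(\Sigmam^{2}+\rho\Id_{n})^{-3}\bigr)$. For the signal term, writing $\rho^{2}(\Sigmam^{2}+\rho\Id_{n})^{-2}$ and using the identity $\frac{d}{d\rho}(\Sigmam^{2}+\rho\Id_{n})^{-2}=-2(\Sigmam^{2}+\rho\Id_{n})^{-3}$ together with the cancellation
\begin{equation*}
2\rho(\Sigmam^{2}+\rho\Id_{n})^{-2}-2\rho^{2}(\Sigmam^{2}+\rho\Id_{n})^{-3}=2\rho\,\Sigmam^{2}(\Sigmam^{2}+\rho\Id_{n})^{-3},
\end{equation*}
I obtain a clean stationarity equation of the form
\begin{equation*}
\rho\,\mathrm{Tr}\!\left(\Sigmam^{2}(\Sigmam^{2}+\rho\Id_{n})^{-3}\Vm^{T}\Rm_{\xv_{0}}\Vm\right)=\sigma_{\zv}^{2}\,\mathrm{Tr}\!\left(\Sigmam^{2}(\Sigmam^{2}+\rho\Id_{n})^{-3}\right).
\end{equation*}

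The key step — and the only non-routine one — is to simplify this implicit equation into the explicit formula (\ref{eq:gamma min approx}). I would let $\Km(\rho):=\Sigmam^{2}(\Sigmam^{2}+\rho\Id_{n})^{-3}$, which is diagonal with entries $k_{i}=\sigma_{i}^{2}/(\sigma_{i}^{2}+\rho)^{3}$, so that the left-hand trace is the $k_{i}$-weighted sum $\sum_{i}k_{i}[\Vm^{T}\Rm_{\xv_{0}}\Vm]_{ii}$. The approximation that makes the theorem go through is to replace this weighted average by the unweighted one, i.e.\ to use
\begin{equation*}
\sum_{i}k_{i}\,[\Vm^{T}\Rm_{\xv_{0}}\Vm]_{ii}\;\approx\;\frac{1}{n}\!\left(\sum_{i}k_{i}\right)\mathrm{Tr}\!\left(\Vm^{T}\Rm_{\xv_{0}}\Vm\right)=\frac{\mathrm{Tr}(\Rm_{\xv_{0}})}{n}\sum_{i}k_{i},
\end{equation*}
where the last equality uses the orthogonality of $\Vm$. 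Equivalently, one is approximating $\Vm^{T}\Rm_{\xv_{0}}\Vm$ by $\frac{\mathrm{Tr}(\Rm_{\xv_{0}})}{n}\Id_{n}$, which preserves the trace and is natural whenever the diagonal of $\Vm^{T}\Rm_{\xv_{0}}\Vm$ is not correlated with the $k_{i}$'s. Plugging this back, the common factor $\mathrm{Tr}(\Km(\rho))$ cancels on both sides, leaving $\rho\cdot\mathrm{Tr}(\Rm_{\xv_{0}})/n=\sigma_{\zv}^{2}$, which is exactly (\ref{eq:gamma min approx}).

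The main obstacle, and really the only delicate part, is justifying the isotropic approximation $\Vm^{T}\Rm_{\xv_{0}}\Vm\approx\frac{\mathrm{Tr}(\Rm_{\xv_{0}})}{n}\Id_{n}$: without it one is left with an implicit equation for $\rho$ that still couples the right singular vectors of $\Am$ to the signal covariance. I would motivate it by the same argument used earlier in the paper for replacing $\yv\yv^{T}$ by $\mathbb{E}(\yv\yv^{T})$, namely that we seek a single universal $\rho_{\text{o}}$ that is agnostic to the fine structure of $\xv_{0}$; the approximation is exact, for instance, whenever $\Rm_{\xv_{0}}$ is a scaled identity, and it is reasonable whenever the weights $k_{i}$ are roughly flat over the indices where the signal energy is concentrated. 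Everything else is routine algebra.
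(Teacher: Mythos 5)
Your proposal is correct and follows essentially the same route as the paper: the same differentiation yielding the stationarity condition $\rho\,\mathrm{Tr}\bigl(\Sigmam^{2}(\Sigmam^{2}+\rho\Id_{n})^{-3}\Vm^{T}\Rm_{\xv_{0}}\Vm\bigr)=\sigma_{\zv}^{2}\,\mathrm{Tr}\bigl(\Sigmam^{2}(\Sigmam^{2}+\rho\Id_{n})^{-3}\bigr)$, followed by the same isotropic replacement $\Vm^{T}\Rm_{\xv_{0}}\Vm\approx\frac{\mathrm{Tr}(\Rm_{\xv_{0}})}{n}\Id_{n}$ (which the paper motivates via the trace inequalities of \cite{wang1986trace} and states explicitly in a footnote, with the error quantified in Appendix~\ref{Apen error}). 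The only cosmetic difference is that you frame the approximation as replacing a weighted average by an unweighted one rather than as picking an average point between the eigenvalue bounds on $S$.
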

\begin{proof}
We can easily prove that the function in (\ref{eq:MSE2}) is convex in $\rho$, and hence its global minimizer (i.e., $\rho_{\text{o}}$) can be obtained by differentiating (\ref{eq:MSE2}) with respect to $\rho$ and setting the result to zero, i.e.,
\begin{align}
\label{eq:MSE'}
&\nabla_{\rho} \ \text{MSE}\left(\rho\right)
=
-2\sigma_{\zv}^2 \text{Tr}\left(\Sigmam^2 \left(\Sigmam^2 + \rho\Id_{n} \right)^{-3} \right) \nonumber \\
&+
2 \rho \underbrace{\text{Tr}\left( \Sigmam^{2} \left(\Sigmam^2 + \rho\Id_{n} \right)^{-3}\Vm^{T}\Rm_{\xv_{0}}\Vm \right)}_{S} = 0.
\end{align}
Equation (\ref{eq:MSE'}) dictates the relationship between the optimal regularization parameter and the problem parameters. By solving (\ref{eq:MSE'}), we can obtain the optimal regularizer $\rho_{\text{o}}$. However, in the general case, and with lack of knowledge about $\Rm_{\xv_{0}}$, we cannot obtain a closed-form expression for $\rho_{\text{o}}$. As a result, we will seek to obtain a suboptimal regularizer in the MSE sense that minimizes (\ref{eq:MSE2}) approximately. In what follows, we show how through some bounds and approximations, we can obtain this suboptimal regularizer. 

By using the trace inequalities in [\cite{wang1986trace}, eq.(5)], we can bound the second term in (\ref{eq:MSE'}) by
\begin{align}
\label{eq:inequality}
&\lambda_{\text{min}}\left(\Rm_{\xv_{0}}\right)\text{Tr}\left(\Sigmam^2 \left(\Sigmam^2 + \rho\Id_{n} \right)^{-3} \right) \nonumber\\
& \leq S=  \text{Tr}\left( \Sigmam^{2} \left(\Sigmam^2 + \rho\Id_{n} \right)^{-3}\Vm^{T}\Rm_{\xv_{0}}\Vm \right) \nonumber\\
& \leq \lambda_{\text{max}}\left(\Rm_{\xv_{0}}\right)\text{Tr}\left(\Sigmam^2 \left(\Sigmam^2 + \rho\Id_{n} \right)^{-3} \right),
\end{align}
where $\lambda_{i}$ is the $i$'th eigenvalue of $\Rm_{\xv_{0}}$. Our main goal in this paper is to find a solution that is approximately feasible for all discrete ill-posed problems and also suboptimal in some sense. In other words, we would like to find a $\rho_{\text{o}}$, for all (or almost all) possible $\Am$, that minimizes the MSE approximately. To achieve this, we consider an \emph{average} value of $S$ based on the inequalities in (\ref{eq:inequality}) as our evaluation point, i.e., 
\begin{align}
\label{eq:d}
S \approx \text{Tr}\left(\Sigmam^2 \left(\Sigmam^2 + \rho\Id_{n} \right)^{-3} \right) \frac{\text{Tr}\left(\Rm_{\xv_{0}} \right)}{n}.
\end{align}
Substituting (\ref{eq:d}) in (\ref{eq:MSE'}) yields\footnote{Another way to look at (\ref{eq:d}) is that we can replace $\Vm^{T} \Rm_{\xv_{0}} \Vm $ inside the trace in (\ref{eq:MSE'}) by a diagonal matrix $\Fm = \text{diag}\left(\text{diag} \left(\Vm^{T} \Rm_{\xv_{0}} \Vm \right)\right)$ without affecting the result. Then, we replace this positive definite matrix $\Fm$ by an identity matrix multiplied by scalar which is given by the average value of the diagonal entries of $\Fm$.}
\begin{align}
\label{eq:MSE' approx}
&\nabla_{\rho} \ \text{MSE}\left(\rho\right)
\approx
-2\sigma_{\zv}^2 \text{Tr}\left(\Sigmam^2 \left(\Sigmam^2 + \rho\Id_{n} \right)^{-3} \right) \nonumber \\
&+
2 \rho \frac{\text{Tr}\left(\Rm_{\xv_{0}} \right)}{n}\text{Tr}\left(\Sigmam^2 \left(\Sigmam^2 + \rho\Id_{n} \right)^{-3} \right) = 0.
\end{align}
Note that the same approximation can be applied from the beginning to the second term in (\ref{eq:MSE2}) and the same result in (\ref{eq:MSE' approx}) will be obtained after taking the derivative of the new approximated MSE function. In Appendix~\ref{Apen error}, we provide the error analysis for this approximation and show that it is bounded in very small feasible region.\\ 
Equation (\ref{eq:MSE' approx}) can now be solved to obtain $\rho_{\text{o}}$ as in (\ref{eq:gamma min approx}).\footnote{In fact, one can prove that when $\xv_{0}$ is i.i.d., (\ref{eq:MSE' approx}) and (\ref{eq:MSE'}) are exactly equivalent to each other (see Appendix~\ref{Apen error}).}
\begin{figure*}[t!]
\setcounter{equation}{36}
\begin{align}
\label{eq:IBPRfunction}
G\left(\rho_{\text{o}}\right) &= \underbrace{\text{Tr}\left( \Sigmam^2 \left(\Sigmam^2 + \rho_{\text{o}} \Id_{n} \right)^{-2} \Um^{T}\yv\yv^{T}\Um \right)\text{Tr}\left( \left(\Sigmam_{n_{1}}^2 + \rho_{\text{o}} \Id_{n_{1}} \right)^{-2}\left(\beta\Sigmam_{n_{1}}^2 + \rho_{\text{o}} \Id_{n_{1}}\right) \right) + \frac{n_{2}}{\rho_{\text{o}}}\text{Tr}\left( \Sigmam^2 \left(\Sigmam^2 + \rho_{\text{o}} \Id_{n} \right)^{-2}\Um^{T}\yv\yv^{T}\Um \right)}_{G_{1}\left(\rho_{\text{o}}\right)}\nonumber\\
&- \underbrace{\text{Tr}\left(\left(\Sigmam^2 + \rho_{\text{o}} \Id_{n} \right)^{-2} \Um^{T}\yv\yv^{T}\Um\right)\text{Tr}\left( \Sigmam_{n_{1}}^2 \left(\Sigmam_{n_{1}}^2 + \rho_{\text{o}} \Id_{n_{1}} \right)^{-2} \left(\beta\Sigmam_{n_{1}}^2 + \rho_{\text{o}} \Id_{n_{1}}\right) \right)}_{ G_{2}\left(\rho_{\text{o}}\right)} = 0.
\end{align}
\hrulefill
\end{figure*}
\end{proof}
\begin{remark}
\normalfont
The solution in (\ref{eq:gamma min approx}) shows that there always exists a positive $\rho_{\text{o}}$, for $\sigma_{\zv}^{2} \neq 0$, which approximately minimizes the MSE in (\ref{eq:MSE2}). The conclusion that the regularization parameter is generally dependent on the noise variance has been shown before in different contexts (see for example \cite{zachariah2015online, hemmerle1975explicit}). For the special case where the entries of $\xv_{0}$ are independent and identically distributed (i.i.d.) with zero mean, we have $\Rm_{\xv_{0}} = \sigma_{\xv_{0}}^{2}\Id_{n}$. Since the linear minimum mean-squared error (LMMSE) estimator of $\xv_{0}$ in $\yv = \Am \xv_{0} + \zv$ is defined as \cite{kay2013fundamentals}
\setcounter{equation}{33}
\begin{equation}
\label{eq:lmmse2}
\hat{\xv}_{\text{LMMSE}}=  \left(\Am^{T} \Am + \sigma_{\zv}^{2} \Rm_{\xv_{0}}^{-1} \Id_{n}\right)^{-1}  \Am^{T} \yv,
\end{equation}
substituting $\Rm_{\xv_{0}} = \sigma_{\xv_{0}}^{2}\Id$ makes the LMMSE regularizer in (\ref{eq:lmmse2}) equivalent to $\rho_{\text{o}}$ in (\ref{eq:gamma min approx}) since $\rho_{\text{o}}= \frac{\sigma_{\zv}^{2}}{\text{Tr}\left(\Rm_{\xv_{0}}\right)/n} = \frac{\sigma_{\zv}^{2}}{\sigma_{\xv_{0}}^{2}}$. This shows that (\ref{eq:gamma min approx}) is exact when the input is white, while for a general input $\xv_{0}$, the optimum matrix regularizer is given in (\ref{eq:lmmse2}). In other words, the result in (\ref{eq:gamma min approx}) provides an approximate optimum scalar regularizer for a general colored input. Note that since $ \sigma_{\zv}^2$ and $\Rm_{\xv_{0}}$ are unknowns, $\rho_{\text{o}}$ cannot be obtained directly from (\ref{eq:gamma min approx}). 
\end{remark}
We are now ready in the following subsection to use the result in (\ref{eq:gamma min approx}) along with the perturbation bound expression in (\ref{eq:eta min 2}) and some reasonable manipulations and approximations to eliminate the dependency of $\delta_{\text{o}}$ in (\ref{eq:eta min 2}) on the unknowns $\sigma_{\zv}^{2}$, and $\Rm_{\xv_{0}}$. Then, we will select a pair of $\delta_{\text{o}}$ and $\rho_{\text{o}}$ from the space of all possible values of $\delta$ and $\rho$ that minimizes the MSE of the proposed estimator solution.
\vspace{-10pt}
\subsection{Setting the Optimal Perturbation Bound that Minimizes the MSE}
We start by applying the same reasoning leading to (\ref{eq:d}) for both the numerator and the denominator of (\ref{eq:eta min 2}) and manipulate to obtain
\begin{align}
\label{eq:eta min 3}
&{\text{Tr}\left( \Sigmam_{n_{1}}^2 \left(\Sigmam_{n_{1}}^2 + \rho_{\text{o}} \Id_{n_{1}} \right)^{-2} \left(\Sigmam_{n_{1}}^2 + \frac{n_1 \sigma_{\zv}^2}{\text{Tr}\left(\Rm_{\xv_{0}}\right)} \Id_{n_{1}} \right)  \right)} \nonumber\\
&\approx 
\delta_{\text{o}}^2 \Bigg[ \text{Tr}\left(\left(\Sigmam_{n_{1}}^2 + \rho_{\text{o}} \Id_{n_{1}} \right)^{-2} \left(\Sigmam_{n_{1}}^2 + \frac{n_1 \sigma_{\zv}^2}{\text{Tr}\left(\Rm_{\xv_{0}}\right)} \Id_{n_{1}} \right)  \right) \nonumber\\
& + \frac{n_2 n_1 \sigma_{\zv}^2
}{\rho_{\text{o}}^{2}{\text{Tr}\left(\Rm_{\xv_{0}}\right)}} \Bigg].
\end{align}
In Section~\ref{sec:Results}, we verify this approximation using simulations.

Now, we will use the relationship of $\sigma_{\zv}^{2}$ and $\text{Tr}\left(\Rm_{\xv_{0}}\right)$ in (\ref{eq:gamma min approx}) to the suboptimal regularizer $\frac{n_{1}}{n}\rho_{\text{o}} \approx \frac{n_{1} \sigma_{\zv}^{2}} {\text{Tr}(\Rm_{\xv_{0}})}$ to impose a constrain on (\ref{eq:eta min 3}) that makes the selected perturbation bound minimizes the MSE and as a result to make (\ref{eq:eta min 3}) an implicit equation in $\delta_{\text{o}}$ and $\rho_{\text{o}}$ only. By doing this, we obtain after some algebraic manipulations
\begin{align}
\label{eq:eta min final 1}
\delta_{\text{o}}^2 \approx
\frac
{\text{Tr}\left( \Sigmam_{n_{1}}^{2} \left(\Sigmam_{n_{1}}^2 + \rho_{\text{o}} \Id_{n_{1}} \right)^{-2} \left(\frac{n}{n_1}\Sigmam_{n_{1}}^2 + \rho_{\text{o}} \Id_{n_{1}} \right)  \right)}
{\text{Tr}\left( \left(\Sigmam_{n_{1}}^2 + \rho_{\text{o}} \Id_{n_{1}} \right)^{-2}  \left(\frac{n}{n_1}\Sigmam_{n_{1}}^2 + \rho_{\text{o}} \Id_{n_{1}} \right) \right) +\frac{n_{2}}{\rho_{\text{o}}}}.
\end{align}
The expression in (\ref{eq:eta min final 1}) reveals that any $\delta_{\text{o}}$ satisfying (\ref{eq:eta min final 1}) minimizes the MSE approximately. Now, we have two equations (\ref{eq:secular equation}) (evaluated at $\delta_{0}$ and $\rho_{0}$) and (\ref{eq:eta min final 1}) in two unknowns $\delta_{0}$ and $\rho_{0}$. Solving these equations and then applying the SVD of $\Am$ to the result equation, result in the characteristic equation for the proposed constrained perturbation regularization approach (COPRA) in (\ref{eq:IBPRfunction}), where $\beta = \frac{n}{n_1}$.  

The COPRA characteristic equation in (\ref{eq:IBPRfunction}) is a function of the problem matrix $\Am$, the received signal $\yv$, and regularization parameter $\rho_{\text{o}}$ which is the only unknown in (\ref{eq:IBPRfunction}. Solving for $G\left(\rho_{\text{o}}\right)= 0$ should lead to the regularization parameter $\rho_{\text{o}}$ that approximately minimizes the MSE of the estimator. Our main interest then is to find a positive root $\rho^{*}_{\text{o}} > 0$ for (\ref{eq:IBPRfunction}). In the following section, we study the main properties of this equation and examine the existence and uniqueness of its positive root. Before that, it worth mentioning the following remark
\setcounter{equation}{37}
\begin{remark}
\normalfont
A special case of the proposed COPRA approach is when all the singular values are significant and so no truncation is required (full column rank matrix, see Fig.~\ref{fig:sv decay}). This is the case where $n_{1}=n$, $n_{2}=0$, and $\Sigmam_{n_{1}}=\Sigmam$. Substituting these values in (\ref{eq:IBPRfunction}) we obtain
\begin{align}
\label{eq:IBPRfunction specail}
&\bar{G}\left(\rho_{\text{o}}\right) = \nonumber\\
&\text{Tr}\left( \Sigmam^2 \left(\Sigmam^2 + \rho_{\text{o}} \Id_{n} \right)^{-2} \Um^{T}\yv\yv^{T}\Um \right)\text{Tr}\left( \left(\Sigmam^{2} + \rho_{\text{o}} \Id_{n} \right)^{-1} \right) \nonumber\\
&- \text{Tr}\left(\left(\Sigmam^{2} + \rho_{\text{o}} \Id_{n} \right)^{-2} \Um^{T}\yv\yv^{T}\Um\right)\text{Tr}\left( \Sigmam^{2} \left(\Sigmam^2 + \rho_{\text{o}} \Id_{n} \right)^{-1}\right)\nonumber\\
&= 0.
\end{align}
\end{remark}
\vspace{6pt}
\section{Analysis of the function $G\left(\rho_{\text{o}}\right)$}
\label{sec:Properties}
In this section, we provide a detailed analysis for the COPRA function $G\left(\rho_{\text{o}}\right)$ in (\ref{eq:IBPRfunction}). We start by examining some main properties of $G\left(\rho_{\text{o}}\right)$ that are straightforward to proof.
\begin{property}
\label{p4}
$G \left(\rho_{\text{o}}\right)$ is continuous over the interval $\left(0, +\infty\right)$.
\end{property}

\begin{property}
\label{p44}
$G \left(\rho_{\text{o}}\right)$ has $n$ different discontinuities at $\rho_{\text{o}} = -\sigma_{i}^{2}, \forall i \in [1, n]$. However, these discontinuities are of no interest as far as COPRA is considered.
\end{property}

\begin{property}
\label{p1}
$\lim_{\rho_{\text{o}} \to 0^{+}} G\left(\rho_{\text{o}}\right)  = +\infty$.
\end{property}

\begin{property}
\label{p2}  
$\lim_{\rho_{\text{o}} \to 0^{-}} G\left(\rho_{\text{o}}\right)  = -\infty$.
\end{property}

\begin{property}
\label{p3}
$\lim_{\rho_{\text{o}} \to +\infty} G\left(\rho_{\text{o}}\right) = 0$.
\end{property}

Property \ref{p1} and \ref{p2} show clearly that $G\left(\rho_{\text{o}}\right)$ has a discontinuity at $\rho_{\text{o}} = 0$.

\begin{property}
\label{p5}
Each of the functions $G_{1}\left(\rho_{\text{o}}\right)$ and $G_{2}\left(\rho_{\text{o}}\right)$ in (\ref{eq:IBPRfunction}) is completely monotonic in the interval $(0 ,+\infty)$.
\end{property}
\begin{proof}
According to \cite{feller2008introduction, widder2015laplace}, a function $F\left(\rho_{\text{o}}\right)$ is completely monotonic if it satisfies
\begin{align}
\label{eq:completeMonotoneCond}
\left(-1\right)^{n} F^{\left( n \right)}\left(\rho_{\text{o}}\right) \geq 0, \ 0 <\rho_{\text{o}} < \infty ,  \forall n\in \mathbb{N},
\end{align}
where $F^{(n)}\left(\rho_{\text{o}}\right)$ is the $n$'th derivative of $F\left(\rho_{\text{o}}\right)$.\\
By continuously differentiating $G_{1}\left(\rho_{\text{o}}\right)$ and $G_{2}\left(\rho_{\text{o}}\right)$, we can easily show that both functions satisfy the monotonic condition in (\ref{eq:completeMonotoneCond}).
\end{proof}
\begin{theorem}
\label{th1}
The COPRA function $G\left(\rho_{\text{o}}\right)$ in (\ref{eq:IBPRfunction}) has at most two roots in the interval $\left(0, +\infty \right).$
\end{theorem}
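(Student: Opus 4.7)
The strategy is to reduce the root-counting problem to one on the derivative via Rolle's theorem, and then exploit the specific structure of $G$ implied by Property~\ref{p5}. By Rolle's theorem, if $G(\rho_{\text{o}})$ has $k$ distinct zeros in the open interval $(0,+\infty)$, then $G'(\rho_{\text{o}})$ has at least $k-1$ zeros there. Hence it suffices to establish that $G'(\rho_{\text{o}})$ has \emph{at most one} zero in $(0,+\infty)$.

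First I would record the decomposition $G = G_1 - G_2$ already appearing in (\ref{eq:IBPRfunction}), and differentiate termwise to obtain $G'(\rho_{\text{o}}) = G_1'(\rho_{\text{o}}) - G_2'(\rho_{\text{o}})$. Property~\ref{p5} asserts that $G_1$ and $G_2$ are completely monotonic on $(0,+\infty)$; in particular $G_1'(\rho_{\text{o}}) < 0$ and $G_2'(\rho_{\text{o}}) < 0$ everywhere on that interval. Therefore $G'(\rho_{\text{o}}) = 0$ is equivalent to $\phi(\rho_{\text{o}}) := G_1'(\rho_{\text{o}})/G_2'(\rho_{\text{o}}) = 1$, where $\phi$ is a positive smooth function on $(0,+\infty)$. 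Counting zeros of $G'$ therefore amounts to counting solutions of $\phi(\rho_{\text{o}}) = 1$.

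The main step, and the principal obstacle, is to show that $\phi(\rho_{\text{o}})$ is \emph{strictly monotonic} on $(0,+\infty)$. Equivalently, I would show that the numerator of
\[
\phi'(\rho_{\text{o}}) = \frac{G_1''(\rho_{\text{o}})\,G_2'(\rho_{\text{o}}) - G_1'(\rho_{\text{o}})\,G_2''(\rho_{\text{o}})}{G_2'(\rho_{\text{o}})^{2}}
\]
maintains one sign throughout $(0,+\infty)$. Complete monotonicity alone is insufficient here (the ratio of two completely monotonic functions need not be monotonic), so this step genuinely uses the explicit rational form of $G_1$ and $G_2$, namely that each is a finite sum of products of elementary kernels of the form $\sigma_i^{2}/(\sigma_i^{2}+\rho_{\text{o}})^{k}$ or $1/(\sigma_i^{2}+\rho_{\text{o}})^{k}$ with nonnegative weights $(\mathbf{U}^T\yv)_i^{2}$, $\beta\sigma_j^{2}$, $n_2$, etc. I would clear denominators by multiplying the numerator of $\phi'$ by the positive function $\rho_{\text{o}}^{3}\prod_{i=1}^{n}(\sigma_i^{2}+\rho_{\text{o}})^{p}\prod_{j=1}^{n_1}(\sigma_j^{2}+\rho_{\text{o}})^{q}$ with $p,q$ chosen to render a polynomial $\Pi(\rho_{\text{o}})$, and then verify that the coefficient sequence of $\Pi$ has no sign change (or a pattern that forces constant sign on $(0,+\infty)$). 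A Cauchy--Schwarz-type inequality for trace sums of the form $\text{Tr}(\Sigmam^{2}(\Sigmam^{2}+\rho_{\text{o}}\Id)^{-k})$ against $\text{Tr}(\Um^{T}\yv\yv^{T}\Um(\Sigmam^{2}+\rho_{\text{o}}\Id)^{-k})$ provides the clean algebraic inequality needed to sign-control the resulting polynomial.

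Once $\phi$ is shown to be strictly monotonic, the equation $\phi(\rho_{\text{o}}) = 1$ has at most one solution, so $G'$ has at most one zero, and by Rolle's theorem $G$ has at most two zeros in $(0,+\infty)$, completing the proof. The boundary data $\lim_{\rho_{\text{o}}\to 0^{+}} G = +\infty$ (Property~\ref{p1}) and $\lim_{\rho_{\text{o}}\to +\infty} G = 0$ (Property~\ref{p3}) then paint the following picture consistent with the bound: either $G$ is monotonically decreasing to $0$ (no positive root), or $G$ dips below zero once and returns to $0$ from below, yielding exactly two roots. If the polynomial sign-count in the obstacle step proves too unwieldy, a fallback is to multiply $G$ directly by $\rho_{\text{o}}\prod_i(\sigma_i^{2}+\rho_{\text{o}})^{2}\prod_{j\le n_1}(\sigma_j^{2}+\rho_{\text{o}})^{2}$ to obtain a polynomial whose positive zeros equal those of $G$ and to bound them by Descartes' rule of signs after exploiting the $G_1 - G_2$ cancellation structure.
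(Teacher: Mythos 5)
Your reduction via Rolle's theorem is sound as far as it goes: if $G'$ had at most one zero in $(0,+\infty)$, then $G$ would indeed have at most two. But the entire weight of the argument then rests on the claim that $\phi(\rho_{\text{o}}) = G_1'(\rho_{\text{o}})/G_2'(\rho_{\text{o}})$ is strictly monotonic, and you never establish this --- you yourself flag it as ``the principal obstacle'' and then only describe a programme (clear denominators, inspect the coefficient sign pattern of a high-degree polynomial $\Pi$, invoke an unspecified Cauchy--Schwarz-type trace inequality) without carrying any of it out. This is a genuine gap, not a routine verification: the ratio of derivatives of two completely monotonic functions is not monotonic in general, and the polynomial $\Pi$ you would obtain has coefficients built from differences of the $G_1$ and $G_2$ data whose signs are not controlled by anything you have stated. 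Your Descartes-rule fallback has the same problem --- the cleared-denominator polynomial has degree growing with $n$ and $n_1$, and nothing in the proposal shows its coefficient sequence has at most two sign changes. As written, the proof does not close.

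For comparison, the paper takes a different and much shorter route that sidesteps the polynomial analysis entirely: it uses Property~\ref{p5} (complete monotonicity of $G_1$ and $G_2$) together with the classical fact that completely monotonic functions admit uniform approximation by finite sums of decaying exponentials $\sum_i l_i e^{-k_i\rho_{\text{o}}}$, and then cites a result bounding the number of zeros of such exponential sums (at most two axis crossings for the relevant class), so that $G = G_1 - G_2$ inherits the bound. That argument is itself somewhat informal --- it leans on an approximation and on the cited zero-count without tracking how approximation error could perturb roots --- but it is the intended mechanism, and it uses complete monotonicity in an essentially different way than your ratio-of-derivatives scheme. If you want to salvage your approach, the missing ingredient is a concrete proof that $G_1''G_2' - G_1'G_2''$ keeps one sign on $(0,+\infty)$; absent that, you should either supply it or switch to an argument of the paper's type.
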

\begin{proof}
The proof of Theorem~\ref{th1} will be conducted in two steps. Firstly, it has been proved in \cite{kammler1976chebyshev, kammler1979least} that any completely monotonic function can be approximated as a sum of exponential functions. That is, if $F\left(\rho_{\text{o}}\right)$ is a completely monotonic, it can be approximated as
\begin{align}
\label{eq:completeMonotoneApprox}
F\left(\rho_{\text{o}}\right) \approx \sum_{i=1}^{l} l_i e^{-k_{i} \rho_{\text{o}}},
\end{align}
where $l$ is the number of the terms in the sum and $l_{i}$ and $k_{i}$ are two constants. It is shown that there always exists a best uniform approximation to $F\left(\rho_{\text{o}}\right)$ where the error in this approximation gets smaller as we increase the number of the terms $l$. However, our main concern here is the relation given by (\ref{eq:completeMonotoneApprox}) more than finding the best number of the terms or the unknown parameters $l_{i}$ and $k_{i}$. To conclude, both functions $G_{1}\left(\rho_{\text{o}}\right)$ and $G_{2}\left(\rho_{\text{o}}\right)$ in (\ref{eq:IBPRfunction}) can be approximated by a sum of exponential functions.

Secondly, it is shown in \cite{shestopaloff2008sums} that the sum of exponential functions has at most two intersections with the abscissa. Consequently, and since the relation in (\ref{eq:IBPRfunction}) can be expressed as a sum of exponential functions, the function $G\left({\rho}_{\text{o}}\right)$ has at most two roots in the interval $\left(0,+\infty \right)$. 
\end{proof}
\begin{theorem}
\label{th2}
There exists a sufficiently small positive value $\epsilon$, 
such that $\epsilon \to 0^{+}$ and $\epsilon \ll \sigma_{i}^{2}$, $ \forall i \in [1,n]$ where the value of the COPRA function $G\left(\rho_{\text{o}}\right)$ in (\ref{eq:IBPRfunction}) is zero (i.e., $\epsilon$ is a positive root for (\ref{eq:IBPRfunction})). However, we are not interested in this root in the proposed COPRA. 
\end{theorem}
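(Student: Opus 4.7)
The plan is to locate the claimed root by analyzing $G(\rho_{\text{o}})$ asymptotically in the regime $\rho_{\text{o}} \ll \sigma_i^2$ for every $i$, where each factor $(\sigma_i^2+\rho_{\text{o}})^{-1}$ appearing in \eqref{eq:IBPRfunction} can be replaced by $\sigma_i^{-2}$ to leading order, and then combining this asymptotic form with Property~\ref{p1} and the intermediate value theorem to extract an actual root.

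Concretely, I would Taylor-expand each of the four traces in \eqref{eq:IBPRfunction} about $\rho_{\text{o}}=0$. Writing
\begin{equation*}
A_0:=\sum_i\frac{|\uv_i^{T}\yv|^2}{\sigma_i^2},\quad B_0:=\beta\sum_{i=1}^{n_1}\frac{1}{\sigma_i^2},\quad C_0:=\sum_i\frac{|\uv_i^{T}\yv|^2}{\sigma_i^4},\quad D_0:=\beta\,n_1
\end{equation*}
for their $\rho_{\text{o}}\to0^+$ limits, substitution into \eqref{eq:IBPRfunction} yields the leading-order relation
\begin{equation*}
G(\rho_{\text{o}})\;\approx\;\bigl(A_0 B_0 - C_0 D_0\bigr)+\frac{n_2 A_0}{\rho_{\text{o}}},
\end{equation*}
equivalently $\rho_{\text{o}} G(\rho_{\text{o}}) \approx n_2 A_0 + (A_0 B_0 - C_0 D_0)\,\rho_{\text{o}}$. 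I would then argue that $A_0 B_0 - C_0 D_0 < 0$ in any genuinely ill-posed instance: the nearly-zero singular values inside $\Sigmam_{n_2}$ inflate $C_0$ through the terms $|\uv_i^{T}\yv|^2/\sigma_i^4$ with $i>n_1$, whereas $B_0$ only sums over the $n_1$ significant indices and $D_0$ is a bounded constant, so $C_0 D_0$ dwarfs $A_0 B_0$. The linear approximation therefore vanishes at
\begin{equation*}
\epsilon^{\star}\;:=\;\frac{n_2 A_0}{C_0 D_0 - A_0 B_0}\;>\;0,
\end{equation*}
and the largeness of the denominator forces $\epsilon^{\star} \ll \sigma_i^2$ for every $i$, consistent with the size condition in the theorem.

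To convert this approximate zero into a true root of $G$, I would invoke Property~\ref{p4} (continuity of $G$ on $(0,+\infty)$) together with Property~\ref{p1} ($G\to+\infty$ as $\rho_{\text{o}}\to 0^+$). The first property ensures $G$ is strictly positive on a right-neighborhood of $0$, while the asymptotic expansion above shows $G(\rho_{\text{o}})<0$ for $\rho_{\text{o}}$ just past $\epsilon^{\star}$, still within the validity window $\rho_{\text{o}}\ll\sigma_i^2$. The intermediate value theorem then delivers a genuine root $\epsilon$ lying close to $\epsilon^{\star}$, with $\epsilon \ll \sigma_i^2$, which is precisely the root asserted in the statement; by Theorem~\ref{th1} this accounts for one of the at-most-two positive roots of $G$.

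The main obstacle will be rigorously justifying the strict inequality $A_0 B_0 < C_0 D_0$ in full generality: it is not a bare Cauchy-Schwarz consequence but rather hinges on the gap between $\Sigmam_{n_1}$ and $\Sigmam_{n_2}$ and on the alignment of $\yv$ with the singular vectors $\uv_i$ associated with the tiny $\sigma_i$; a splitting argument that separates the $n_1$ and $n_2$ blocks of $A_0$ and $C_0$ should make the dominance quantitative. A uniform remainder bound on the Taylor expansion of each trace factor over a small window containing $\epsilon^{\star}$ is also needed so that the sign change observed in the approximation transfers to the exact $G$. This same picture clarifies the parenthetical remark that the root is uninteresting for COPRA: $\epsilon$ is an artifact driven by the ill-conditioned tail of $\Sigmam$ rather than by the MSE-minimization argument that motivated \eqref{eq:IBPRfunction}.
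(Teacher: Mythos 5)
Your proposal follows essentially the same route as the paper's Appendix~B: expand $G$ to leading order in the regime $\rho_{\text{o}}\ll\sigma_i^2$, obtain exactly the form $G(\rho_{\text{o}})\approx (A_0B_0-C_0D_0)+n_2A_0/\rho_{\text{o}}$ (this is \eqref{eqC:MSE sum 2} with $A_0=\sum_i\sigma_i^{-2}b_i^2$, $B_0=\beta\sum_{j\le n_1}\sigma_j^{-2}$, $C_0=\sum_i\sigma_i^{-4}b_i^2$, $D_0=\beta n_1$), and read off the root $\epsilon= n_2A_0/(C_0D_0-A_0B_0)$, whose positivity reduces to $n_1\sum_i\sigma_i^{-4}b_i^2\ge\sum_i\sigma_i^{-2}b_i^2\sum_{j\le n_1}\sigma_j^{-2}$, i.e.\ the paper's \eqref{eqC:MSE sum 4}.

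The one step you flag as the ``main obstacle'' is precisely where the paper does supply an argument, so it is worth recording how it closes: the right-hand side is bounded above by $n_1\sigma_{n_1}^{-2}\sum_i\sigma_i^{-2}b_i^2$ using only the ordering $\sigma_1\ge\dots\ge\sigma_n$, and the left-hand side is bounded below by the same quantity by invoking the tail-dominance property $\sum_{i>n_1}\sigma_i^{-2}\ge\sum_{i\le n_1}\sigma_i^{-2}$, which the paper attributes to the way the splitting threshold defining $n_1$ and $n_2$ is chosen --- exactly the block-splitting of $A_0$ and $C_0$ you anticipated. (As written, the paper's passage from \eqref{eqC:MSE sum 6} to \eqref{eqC:MSE sum 7} is itself somewhat terse --- it requires $\sum_i\sigma_i^{-2}b_i^2(\sigma_i^{-2}-\sigma_{n_1}^{-2})\ge 0$, where the $i\le n_1$ terms are negative and must be outweighed by the $i>n_1$ terms --- so your caution about the inequality hinging on the gap between $\Sigmam_{n_1}$ and $\Sigmam_{n_2}$ and on the components $b_i$ for $i>n_1$ is well placed.) Your additional intermediate-value-theorem step to convert the approximate zero into an exact root of $G$ is a rigor upgrade the paper does not carry out; it merely asserts that the approximation is uniform enough not to displace the root, so your version is, if anything, slightly more careful on that point.
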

\begin{proof}
The proof of Theorem~\ref{th2} is in Appendix~\ref{Apen A}.
\end{proof}
\begin{theorem}
\label{th3}
A sufficient condition for the function $G\left(\rho_{\text{o}}\right)$ to approach zero at $\rho_{\text{o}} = +\infty $ from a positive direction is given by
\begin{equation}
\label{eq:solutioncondition}
n  \text{Tr}\left(\Sigmam^{2} \bv\bv^{T}\right)
> \text{Tr}\left(\Sigmam_{n_{1}}^{2}\right) \text{Tr}\left(\bv\bv^{T}\right)
\end{equation}
where $\bv = \Um^{T}\yv.$
\end{theorem}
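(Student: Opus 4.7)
The plan is to analyze the behaviour of $G(\rho_{\text{o}})$ as $\rho_{\text{o}} \to +\infty$ by performing a Neumann-type asymptotic expansion of each of the four distinct matrix factors appearing in $G_{1}$ and $G_{2}$, isolating the first non-vanishing order, and then reading off the sign of its coefficient. Since Property~\ref{p3} already tells us that $G(\rho_{\text{o}}) \to 0$, the only remaining question is from which side this convergence happens; that is determined entirely by the sign of the leading asymptotic coefficient.

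First I would write, for $\rho_{\text{o}} \gg \sigma_{i}^{2}$, the elementwise expansions
\begin{align*}
\frac{\sigma_{i}^{2}}{(\sigma_{i}^{2}+\rho_{\text{o}})^{2}} &= \frac{\sigma_{i}^{2}}{\rho_{\text{o}}^{2}} + O(\rho_{\text{o}}^{-3}), \qquad \frac{1}{(\sigma_{i}^{2}+\rho_{\text{o}})^{2}} = \frac{1}{\rho_{\text{o}}^{2}} + O(\rho_{\text{o}}^{-3}), \\
\frac{\beta\sigma_{i}^{2}+\rho_{\text{o}}}{(\sigma_{i}^{2}+\rho_{\text{o}})^{2}} &= \frac{1}{\rho_{\text{o}}} + O(\rho_{\text{o}}^{-2}), \qquad \frac{\sigma_{i}^{2}(\beta\sigma_{i}^{2}+\rho_{\text{o}})}{(\sigma_{i}^{2}+\rho_{\text{o}})^{2}} = \frac{\sigma_{i}^{2}}{\rho_{\text{o}}} + O(\rho_{\text{o}}^{-2}),
\end{align*}
and sum them against $\bv\bv^{T}$ or $\Id$ as appropriate. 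This yields
\begin{align*}
\text{Tr}\!\left(\Sigmam^{2}(\Sigmam^{2}+\rho_{\text{o}}\Id_{n})^{-2}\bv\bv^{T}\right) &= \rho_{\text{o}}^{-2}\text{Tr}(\Sigmam^{2}\bv\bv^{T}) + O(\rho_{\text{o}}^{-3}), \\
\text{Tr}\!\left((\Sigmam^{2}+\rho_{\text{o}}\Id_{n})^{-2}\bv\bv^{T}\right) &= \rho_{\text{o}}^{-2}\text{Tr}(\bv\bv^{T}) + O(\rho_{\text{o}}^{-3}), \\
\text{Tr}\!\left((\Sigmam_{n_{1}}^{2}+\rho_{\text{o}}\Id_{n_{1}})^{-2}(\beta\Sigmam_{n_{1}}^{2}+\rho_{\text{o}}\Id_{n_{1}})\right) &= n_{1}/\rho_{\text{o}} + O(\rho_{\text{o}}^{-2}), \\
\text{Tr}\!\left(\Sigmam_{n_{1}}^{2}(\Sigmam_{n_{1}}^{2}+\rho_{\text{o}}\Id_{n_{1}})^{-2}(\beta\Sigmam_{n_{1}}^{2}+\rho_{\text{o}}\Id_{n_{1}})\right) &= \text{Tr}(\Sigmam_{n_{1}}^{2})/\rho_{\text{o}} + O(\rho_{\text{o}}^{-2}).
\end{align*}

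Substituting these into the definitions of $G_{1}$ and $G_{2}$ in (\ref{eq:IBPRfunction}), the crucial observation is that the explicit $n_{2}/\rho_{\text{o}}$ term in $G_{1}$ combines with the $n_{1}/\rho_{\text{o}}$ leading term of the third trace above to give the full count $n = n_{1}+n_{2}$. Multiplying out, the leading orders are
\begin{align*}
G_{1}(\rho_{\text{o}}) &= \frac{n\,\text{Tr}(\Sigmam^{2}\bv\bv^{T})}{\rho_{\text{o}}^{3}} + O(\rho_{\text{o}}^{-4}), \qquad G_{2}(\rho_{\text{o}}) = \frac{\text{Tr}(\Sigmam_{n_{1}}^{2})\text{Tr}(\bv\bv^{T})}{\rho_{\text{o}}^{3}} + O(\rho_{\text{o}}^{-4}),
\end{align*}
so that
\begin{equation*}
G(\rho_{\text{o}}) = \frac{n\,\text{Tr}(\Sigmam^{2}\bv\bv^{T}) - \text{Tr}(\Sigmam_{n_{1}}^{2})\text{Tr}(\bv\bv^{T})}{\rho_{\text{o}}^{3}} + O(\rho_{\text{o}}^{-4}).
\end{equation*}
Hence $G(\rho_{\text{o}}) \to 0^{+}$ precisely when the numerator above is positive, which is exactly condition (\ref{eq:solutioncondition}), proving sufficiency.

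The only delicate step is the careful bookkeeping at order $\rho_{\text{o}}^{-3}$: all the sub-leading $O(\rho_{\text{o}}^{-4})$ contributions from both factors in $G_{1}$ and $G_{2}$ must be checked not to leak into the leading coefficient. I expect this to be the main (mild) obstacle, and it is resolved by verifying that the product of an $O(\rho_{\text{o}}^{-2})$ factor with an $O(\rho_{\text{o}}^{-2})$ correction only produces an $O(\rho_{\text{o}}^{-4})$ error, so that the $\rho_{\text{o}}^{-3}$ coefficient is determined uniquely by the leading orders of each factor as written above. The rest of the argument is purely algebraic.
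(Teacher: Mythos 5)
Your proposal is correct and follows essentially the same route as the paper: both arguments reduce $G$ to its leading $\rho_{\text{o}}^{-3}$ asymptotic coefficient (the paper by writing the traces as explicit sums and discarding the $\tau=\lim 1/\rho_{\text{o}}$ terms, you by a term-by-term expansion of each factor), observe that the $n_{2}/\rho_{\text{o}}$ term combines with the $n_{1}/\rho_{\text{o}}$ contribution to produce the factor $n$, and conclude that positivity of $n\,\text{Tr}(\Sigmam^{2}\bv\bv^{T})-\text{Tr}(\Sigmam_{n_{1}}^{2})\text{Tr}(\bv\bv^{T})$ forces $G(\rho_{\text{o}})\to 0^{+}$. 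Your bookkeeping of the $O(\rho_{\text{o}}^{-4})$ remainders is, if anything, slightly more careful than the paper's limit manipulation.
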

\begin{proof} 
Let $\bv = \Um^{T}\yv$ as in (\ref{eq:IBPRfunction}). Given that $\Sigmam^{2}$ is a diagonal matrix, $ \Sigmam^{2}= \text{diag}\left(\sigma_{1}^{2}, \sigma_{2}^{2}, \dotsi,\sigma_{n}^{2}\right)$, and from the trace function property, we can replace $\bv \bv^{T} = \Um^{T}\yv\yv^{T}\Um$ in (\ref{eq:IBPRfunction}) by a diagonal matrix $\bv\bv^{T}_{d}$ that contains $\bv\bv^{T}$ diagonal entries without affecting the result. By defining $\bv \bv^{T}_{d}= \text{diag}\left(b_{1}^{2}, b_{2}^{2}, \dotsi,b_{n}^{2}\right)$, we can write (\ref{eq:IBPRfunction}) as 
\begin{align}
\label{eq2:MSE B sum 1}
&G\left(\rho_{\text{o}}\right)
=
\frac{\beta}{\rho_{\text{o}}^{4}}\sum_{j=1}^{n}\frac{\sigma_{j}^{2} b_{j}^{2}}{\left(\frac{\sigma_{j}^{2}}{\rho_{\text{o}}}+1\right)^{2}}\sum_{i=1}^{n_{1}}\frac{ \sigma_{i}^{2}}{\left(\frac{\sigma_{i}^{2}}{\rho_{\text{o}}}+1\right)^{2}} \nonumber\\
&+
\frac{1}{\rho_{\text{o}}^{3}}\sum_{j=1}^{n}\frac{\sigma_{j}^{2} b_{j}^{2}}{\left(\frac{\sigma_{j}^{2}}{\rho_{\text{o}}}+1\right)^{2}}\sum_{i=1}^{n_{1}}\frac{1}{\left(\frac{\sigma_{i}^{2}}{\rho_{\text{o}}}+1\right)^{2}} \nonumber\\
&-
\frac{\beta}{\rho_{\text{o}}^{4}}\sum_{j=1}^{n}\frac{b_{j}^{2}}{\left(\frac{\sigma_{j}^{2}}{\rho_{\text{o}}}+1\right)^{2}}\sum_{i=1}^{n_{1}}\frac{\sigma_{i}^{4}}{\left(\frac{\sigma_{i}^{2}}{\rho_{\text{o}}}+1\right)^{2}}\nonumber\\
&-
\frac{1}{\rho_{\text{o}}^{3}}\sum_{j=1}^{n}\frac{b_{j}^{2}}{\left(\frac{\sigma_{j}^{2}}{\rho_{\text{o}}}+1\right)^{2}}\sum_{i=1}^{n_{1}}\frac{\sigma_{i}^{2}}{\left(\frac{\sigma_{i}^{2}}{\rho_{\text{o}}}+1\right)^{2}} \nonumber\\
&+
\frac{n_{2}}{\rho_{\text{o}}^{3}}\sum_{j=1}^{n}\frac{\sigma_{j}^{2}b_{j}^{2}}{\left(\frac{\sigma_{j}^{2}}{\rho_{\text{o}}}+1\right)^{2}}.
\end{align}
Then, we use some algebraic manipulations to obtain
\begin{align}
\label{eq2:MSE B sum 2}
 &G\left(\rho_{\text{o}}\right)
=
\frac{1}{\rho_{\text{o}}^{3}}\sum_{j=1}^{n}\frac{\sigma_{j}^{2} b_{j}^{2}}{\left(\frac{\sigma_{j}^{2}}{\rho_{\text{o}}}+1\right)^{2}} \Bigg[ \frac{\beta}{\rho_{\text{o}}}\sum_{i=1}^{n_{1}}\frac{\sigma_{i}^{2}}{\left(\frac{\sigma_{i}^{2}}{\rho_{\text{o}}}+1\right)^{2}} \nonumber\\
&+
\sum_{i=1}^{n_{1}}\frac{1}{\left(\frac{\sigma_{i}^{2}}{\rho_{\text{o}}}+1\right)^{2}} + n_{2} \Bigg]-
\frac{1}{\rho_{\text{o}}^{3}}\sum_{j=1}^{n}\frac{b_{j}^{2}}{\left(\frac{\sigma_{j}^{2}}{\rho_{\text{o}}}+1\right)^{2}} \times \nonumber\\
&\Bigg[ \frac{\beta}{\rho_{\text{o}}}\sum_{i=1}^{n_{1}}\frac{\sigma_{i}^{4}}{\left(\frac{\sigma_{i}^{2}}{\rho_{\text{o}}}+1\right)^{2}} +
\sum_{i=1}^{n_{1}}\frac{\sigma_{i}^{2}}{\left(\frac{\sigma_{i}^{2}}{\rho_{\text{o}}}+1\right)^{2}}\Bigg].
\end{align}
Now, evaluating the limit of (\ref{eq2:MSE B sum 2}) as $\rho_{\text{o}} \to +\infty$ we obtain 
\begin{align}
\label{eq2:MSE B sum 3}
\lim_{\rho_{\text{o}} \to +\infty} G\left(\rho_{\text{o}}\right)
&=
\left(\lim_{\rho_{\text{o}} \to +\infty} \frac{1}{\rho_{\text{o}}^{3}}\right) \nonumber\\
&\times \Bigg[\sum_{j=1}^{n}\sigma_{j}^{2} b_{j}^{2} \Big( \tau \beta\sum_{i=1}^{n_{1}}\sigma_{i}^{2} + \sum_{i=1}^{n_{1}}1 + n_{2} \Big)\nonumber\\
&-
\sum_{j=1}^{n}b_{j}^{2} \Big( \tau \beta\sum_{i=1}^{n_{1}}\sigma_{i}^{4} +\sum_{i=1}^{n_{1}}\sigma_{i}^{2}\Big)\Bigg],
\end{align}
where $\tau = \lim_{\rho_{\text{o}} \to +\infty} \frac{1}{\rho_{\text{o}}}$. The relation in (\ref{eq2:MSE B sum 3}) can be simplified to
\begin{align}
\label{eq2:MSE B sum 4}
&\lim_{\rho_{\text{o}} \to +\infty} G\left(\rho_{\text{o}}\right) = \left(\lim_{\rho_{\text{o}} \to +\infty} \frac{1}{\rho_{\text{o}}^{3}}\right) \nonumber\\
&\times \Bigg[\sum_{j=1}^{n}\sigma_{j}^{2} b_{j}^{2} \Big( \tau \beta\sum_{i=1}^{n_{1}}\sigma_{i}^{2} + n \Big)- \sum_{j=1}^{n}b_{j}^{2} \Big( \tau \beta\sum_{i=1}^{n_{1}}\sigma_{i}^{4} +\sum_{i=1}^{n_{1}}\sigma_{i}^{2}\Big)\Bigg]
\end{align}
It is obvious that the limit in (\ref{eq2:MSE B sum 4}) is zero. However, the direction where the limit approaches zero depends on the sign of the term between the square brackets. For $G\left(\rho_{\text{o}}\right)$ to approach zero from the positive direction, knowing that the terms that are independent of $\tau$ are the dominants, the following condition must hold:
\begin{equation}
\label{eq2:MSE B sum 5}
n \left(\sum_{j=1}^{n} \sigma_{j}^2 b_{j}^{2} \right)
> \left(\sum_{i=1}^{n_{1}} \sigma_{i}^2 \right)  \left(\sum_{j=1}^{n} b_{j}^{2}\right).
\end{equation}
Which is the same as (\ref{eq:solutioncondition}).
\end{proof}
\begin{theorem}
\label{theo2:th4}
If (\ref{eq:solutioncondition}) is satisfied, then $G\left(\rho_{\text{o}}\right)$ has a unique positive root in the interval $\left(\epsilon, +\infty \right)$.
\end{theorem}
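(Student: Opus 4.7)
The plan is to combine the structural properties of $G(\rho_{\text{o}})$ developed in this section with the asymptotic sign information carried by the hypothesis (\ref{eq:solutioncondition}), and to close the argument with the at-most-two-roots bound of Theorem~\ref{th1}. The ingredients I will draw on are: continuity of $G$ on $(0,+\infty)$ (Property~\ref{p4}); the divergence $\lim_{\rho_{\text{o}}\to 0^{+}}G(\rho_{\text{o}})=+\infty$ (Property~\ref{p1}); the spurious small root $\epsilon$ supplied by Theorem~\ref{th2}; the one-sided limit $G(\rho_{\text{o}})\to 0^{+}$ as $\rho_{\text{o}}\to +\infty$, which follows from Theorem~\ref{th3} exactly because (\ref{eq:solutioncondition}) holds; and the cap of at most two positive roots from Theorem~\ref{th1}.

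The centrepiece of the proof is locating some $\rho_{1}\in(\epsilon,+\infty)$ at which $G(\rho_{1})<0$. Granted this, Theorem~\ref{th3} provides a point $\rho_{2}>\rho_{1}$ large enough that $G(\rho_{2})>0$, whereupon the intermediate value theorem applied to the continuous function $G$ on $[\rho_{1},\rho_{2}]$ yields a root $\rho_{\text{o}}^{\ast}\in(\rho_{1},\rho_{2})\subset(\epsilon,+\infty)$. Theorem~\ref{th1} then upgrades existence to uniqueness, since the total number of positive roots is at most two and $\epsilon$ already accounts for one of them.

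To produce the negative value $G(\rho_{1})$, I would carry out a local analysis at $\epsilon$. The function $G$ is strictly positive on $(0,\epsilon)$ because it descends continuously from $+\infty$ at $0^{+}$ to the value $0$ at $\epsilon$; the only scenario that would obstruct $G$ from taking negative values just past $\epsilon$ is a tangential touch, i.e.\ $\epsilon$ being a local minimum of $G$ equal to zero. This tangential case can be excluded by computing the one-sided derivative $G'(\epsilon^{+})$ directly from the explicit rational expression in (\ref{eq:IBPRfunction}) and verifying that it is strictly negative, which forces a transversal crossing and hence an open interval to the right of $\epsilon$ on which $G<0$. An alternative route uses the decomposition $G=G_{1}-G_{2}$ together with the complete monotonicity of Property~\ref{p5} to argue that $G_{2}$ strictly overtakes $G_{1}$ immediately past $\epsilon$.

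The delicate step is precisely this transversality check at $\epsilon$; once it is secured, the remainder of the argument is a short application of the intermediate value theorem and the root-count bound of Theorem~\ref{th1}. I expect the derivative computation to reduce to verifying the strict positivity of a single algebraic combination of the singular values $\sigma_{i}$ and the projected data $b_{i}=\bigl[\Um^{T}\yv\bigr]_{i}$, with no additional assumption beyond (\ref{eq:solutioncondition}) being required.
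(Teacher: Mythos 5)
Your proposal is correct and follows essentially the same route as the paper's proof: positivity near $0^{+}$ (Property~\ref{p1}), the spurious root at $\epsilon$ (Theorem~\ref{th2}), a sign change to negative past $\epsilon$, the limit $G(\rho_{\text{o}})\to 0^{+}$ under (\ref{eq:solutioncondition}) (Theorem~\ref{th3}), the intermediate value theorem, and the at-most-two-roots cap (Theorem~\ref{th1}). The one step you flag as delicate --- ruling out a tangential touch at $\epsilon$ so that $G$ actually becomes negative --- is simply asserted in the paper ("then it switches to the negative region after that"); your transversality check via $G'(\epsilon^{+})$ is a legitimate way to close it, and it is consistent with the local form $G(\rho_{\text{o}})\approx C_{1}+n_{2}C_{2}/\rho_{\text{o}}$ with $C_{2}>0$ implicit in Appendix~\ref{Apen A}, which is strictly decreasing through $\epsilon$.
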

\begin{proof}
According to Theorem~\ref{th1}, the function $G\left(\rho_{\text{o}}\right)$ can have no root, one, or two roots. We have already proved in Theorem~\ref{th2} that there exists a significantly small positive root for the COPRA function at $\rho_{\text{o,1}} =\epsilon$ but we are not interested in this root. In other words, we would like to see if there exists a second root for $G\left(\rho_{\text{o}}\right)$ in the interval $\left(\epsilon, +\infty\right)$. 

From Property~\ref{p1} and Theorem~\ref{th2}, we can conclude that the COPRA function has a positive value before $\epsilon$, then it switches to the negative region after that. The condition in (\ref{eq:solutioncondition}) guarantees that $G\left(\rho_{\text{o}}\right)$ approaches zero from a positive direction as $\rho_{\text{o}}$ approaches $+\infty$. This means that $G\left(\rho_{\text{o}}\right)$ has an extremum in the interval $\left(\epsilon, +\infty\right)$, and this extremum is actually a minimum point. If the point of the extremum is considered to be $\rho_{\text{o,m}}$, then the function starts increasing for $\rho_{\text{o}} > \rho_{\text{o,m}}$ until it approaches the second zero crossing at $\rho_{\text{o},2}$. Since Theorem~\ref{th1} states clearly that we cannot have more than two roots, we conclude that when (\ref{eq:solutioncondition}) holds, we have only one unique positive root over the interval $\left(\epsilon, +\infty\right)$. 
\end{proof}
\vspace{-10pt}
\subsection{Finding the Root of $G\left(\rho_{\text{o}}\right)$}
\label{subsec:find root}
To find the positive root of the COPRA function $G\left(\rho_{\text{o}}\right)$ in (\ref{eq:IBPRfunction}), Newton's method \cite{zarowski2004introduction} can be used. The function $G\left(\rho_{\text{o}}\right)$ is differentiable in the interval $\left(\epsilon, +\infty\right)$ and the expression of the first derivative $G^{'}\left(\rho_{\text{o}}\right)$ can be easily obtained. Newton's method can then be applied in a straightforward manner to find this root. Starting from an initial value $\rho_{\text{o}}^{n=0} > \epsilon $ that is sufficiency small, the following iterations are performed:
 \begin{equation}
\label{eq:Newton}
\rho_{\text{o}}^{n+1} = \rho_{\text{o}}^{n} - \frac{G\left(\rho_{\text{o}}^{n}\right)}{G^{'}\left(\rho_{\text{o}}^{n}\right)}.
\end{equation}
The iterations stop when $|G\left(\rho_{\text{o}}^{n+1}\right)|< \bar{\xi}$, where $\bar{\xi}$ is a sufficiently small positive quantity.
\vspace{-10pt}
\subsection{Convergence}
\label{subsec:converge}
When condition  (\ref{eq:solutioncondition}) is satisfied, the convergence of Newton's method can be easily proved. As a result from Theorem~\ref{th3}, the function $G\left(\rho_{\text{o}}\right)$ has always a negative value in the interval $\left(\epsilon, \rho_{o,2}\right)$. It is also clear that $G\left(\rho_{\text{o}}\right)$ is an increasing function in the interval $[\rho_{\text{o}}^{n= 0}, \rho_{\text{o},2}]$. Thus, starting from $\rho_{\text{o}}^{n= 0}$, (\ref{eq:Newton}) will produce a consecutively increasing estimate for $\rho_{\text{o}}$. Convergence occurs when $G\left(\rho_{\text{o}}^{n}\right) \rightarrow 0$ and $\rho_{\text{o}}^{n+1}\rightarrow \rho_{\text{o}}^{n}$. When the condition in (\ref{eq:solutioncondition}) is not satisfied, the regularization parameter should be set to $\epsilon$. 
\vspace{-10pt}
\subsection{COPRA Summary}
\label{subsec2:I-BPR summery}
The proposed COPRA discussed in the previous sections is summarized in Algorithm~\ref{COPRA ALGORITHM}.
\begin{algorithm}[t!]
\caption{COPRA summary}
\begin{algorithmic}[t!]
\IF {(\ref{eq:solutioncondition}) is not satisfied}
\STATE $\rho_{\text{o}} = \epsilon$.
\ELSE
\STATE Define $\tilde{\xi}$ as the iterations stopping criterion.
\STATE Set $\rho_{\text{o}}^{n =0}$ to a sufficiently small positive quantity.
\STATE Find {$G\left(\rho_{\text{o}}^{n=0}\right)$} using (\ref{eq:IBPRfunction}), and compute its derivative {$G^{'}\left(\rho_{\text{o}}^{n=0}\right)$.}
\WHILE{$|G\left(\rho_{\text{o}}^{n}\right)| > \tilde{\xi}$}
\STATE Solve (\ref{eq:Newton})  to get $\rho_{\text{o}}^{n+1}$.
\STATE $\rho_{\text{o}}^{n} =\rho_{\text{o}}^{n+1}$.
\ENDWHILE
\ENDIF
\STATE Find $\hat{\xv}$ using (\ref{eq:copra solution}).
\end{algorithmic}
\label{COPRA ALGORITHM}
\end{algorithm}
\section{Numerical Results}
\label{sec:Results}
In this section, we perform a comprehensive set of simulations to examine the performance of the proposed COPRA and compare it with benchmark regularization methods.

Three different scenarios of simulation experiments are performed. Firstly, the proposed COPRA is applied to a set of nine real-world discrete ill-posed problems that are commonly used in testing the performance of regularization methods in discrete ill-posed problems. Secondly, COPRA is used to estimate the signal when $\Am$ is a random rank deficient matrix generated as
\begin{equation}
\Am  = \frac{1}{n}\Bm \Bm^{T},
\end{equation}
where $\Bm \left(m \times n, m > n\right)$ is a random matrix with i.i.d. zero mean unit variance Gaussian random entries. This is a theoretical test example which is meant to illustrate the robustness of COPRA and to make sure that the obtained results are applicable for broad class of matrices. Finally, an image restoration in image tomography ill-posed problem is considered\footnote{The MATLAB code of the COPRA is provided at http://faculty.kfupm.edu.sa/ee/naffouri/publications.html.}.
\vspace{-10pt}
\subsection{Real-World Discrete Ill-posed Problems}
The Matlab regularization toolbox \cite{hansen1994regularization} is used to generate pairs of a matrix $\Am \in \mathbb{R}^{50 \times 50}$ and a signal $\xv_{0}$. The toolbox provides many real-world discrete ill-posed problems that can be used to test the performance of regularization methods. The problems are derived from discretization of Fredholm integral equation as in (\ref{eq:kernal equation}) and they arise in many signal processing applications\footnote{For more details about the test problems consult \cite{hansen1994regularization}.}. 
\begin{figure*}[h!]
	  \centering	 	
	\subfigure[Tomo  (CN = 3.2 $\times 10^{16}$).]{\label{fig:tomonmse}\includegraphics[width=2.3in]{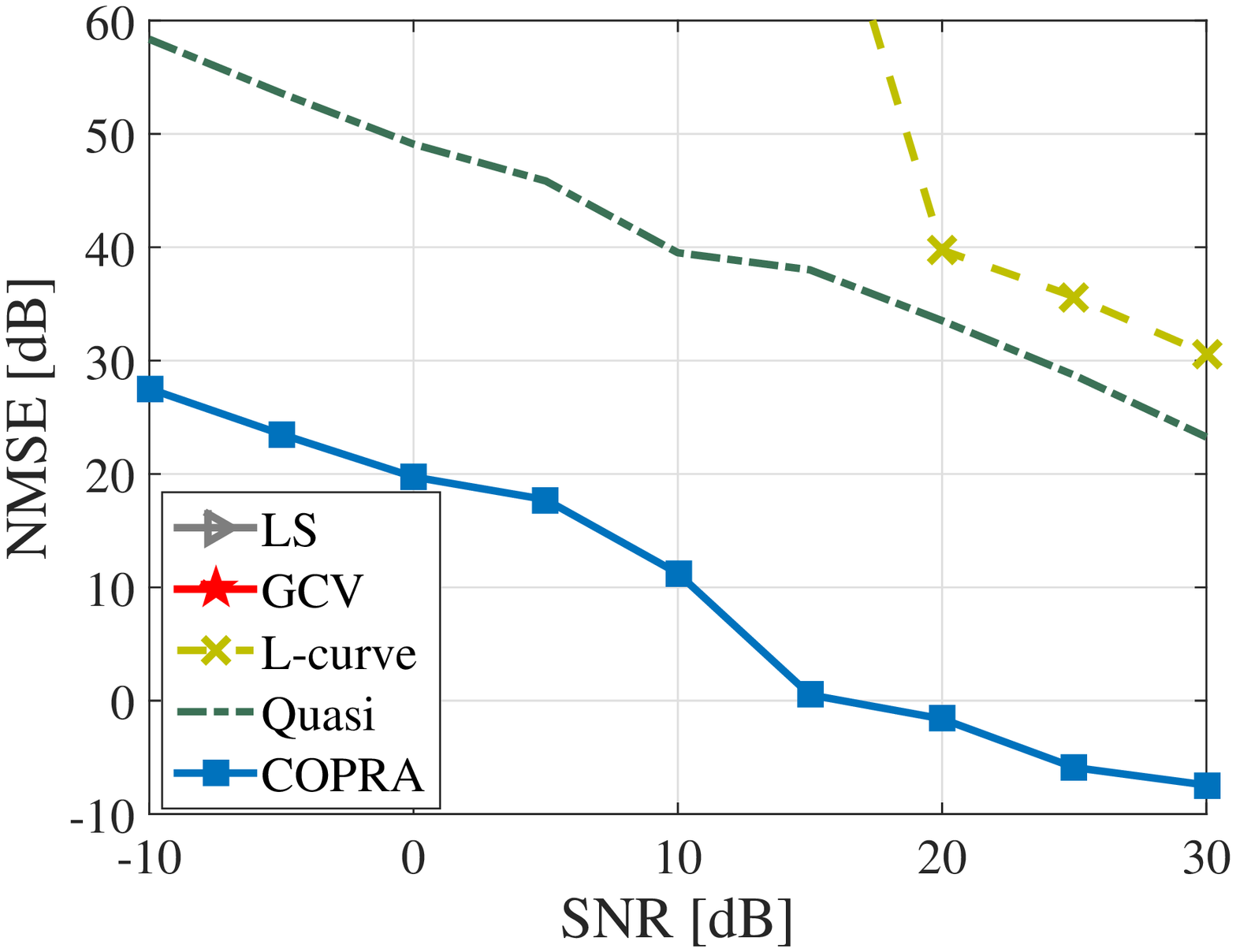}}
	\subfigure[Wing \ (CN = 1.6 $\times 10^{18}$).]{\label{fig:out3}\includegraphics[width=2.3in]{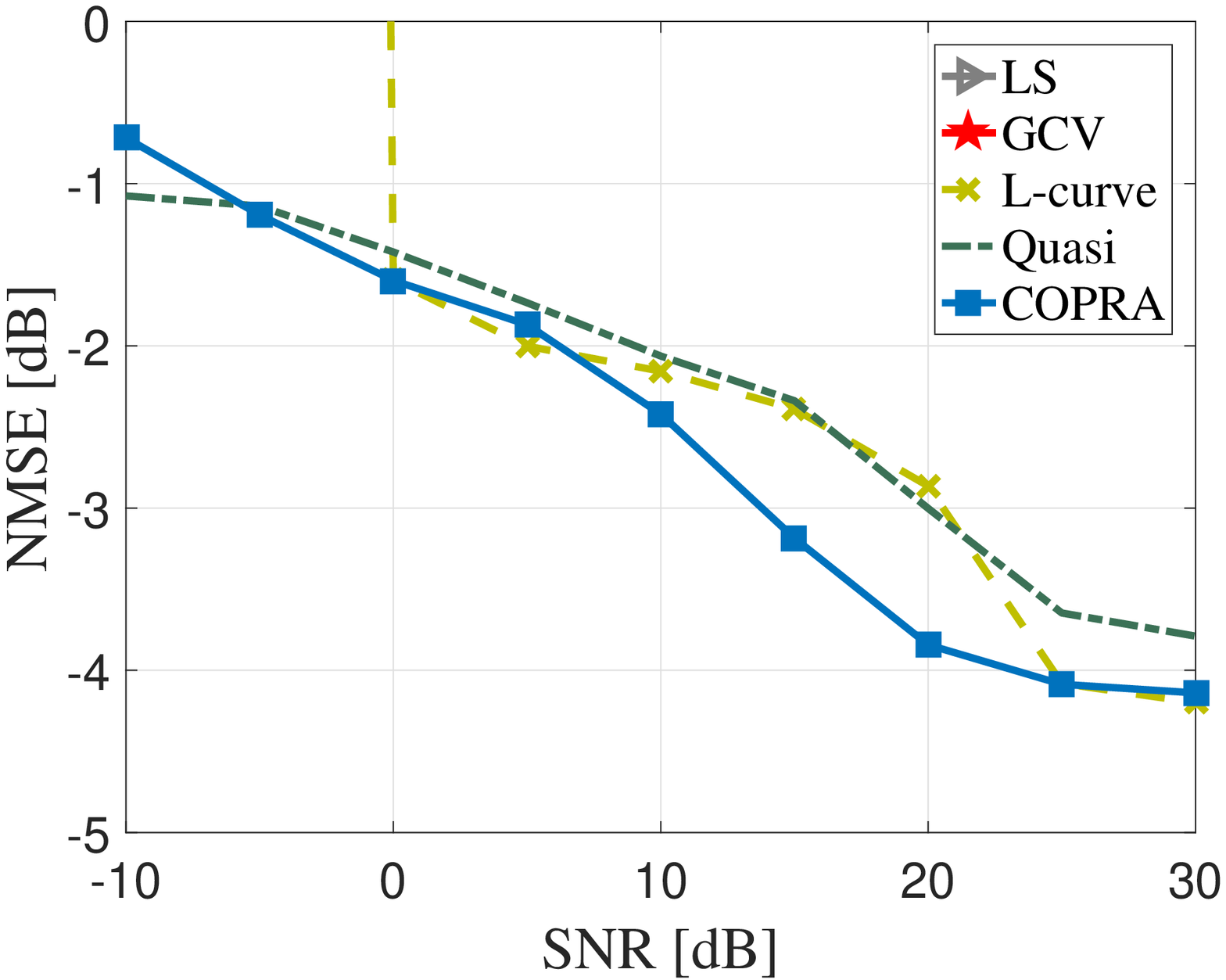}}
	\subfigure[Heat \ (CN = 2.4 $\times 10^{26}$).]{\label{fig:out3}\includegraphics[width=2.3in]{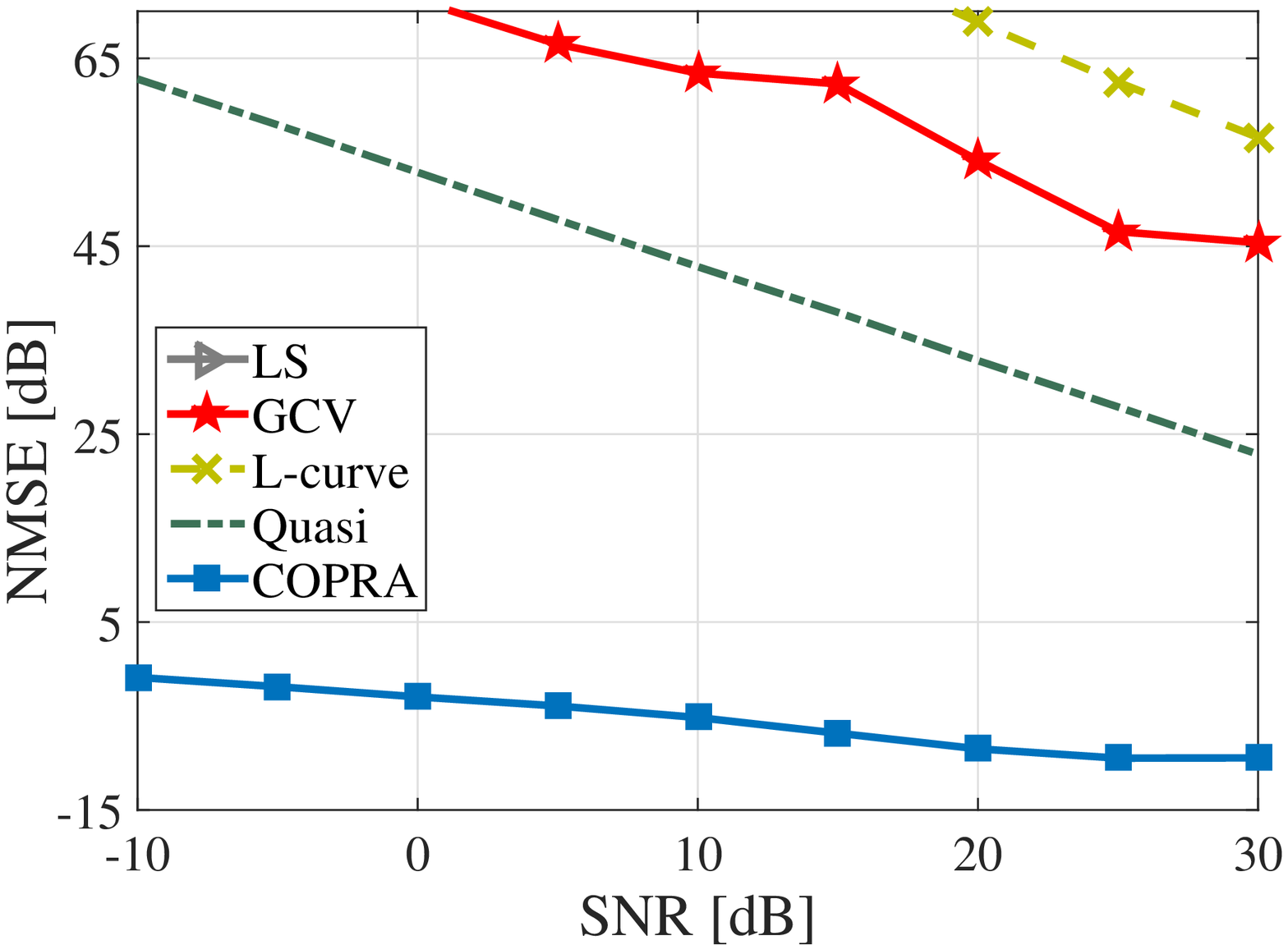}}
	
    \subfigure[Spikes \ (CN = 4.6 $\times 10^{18}$).]{\label{fig:out3}\includegraphics[width=2.3in]{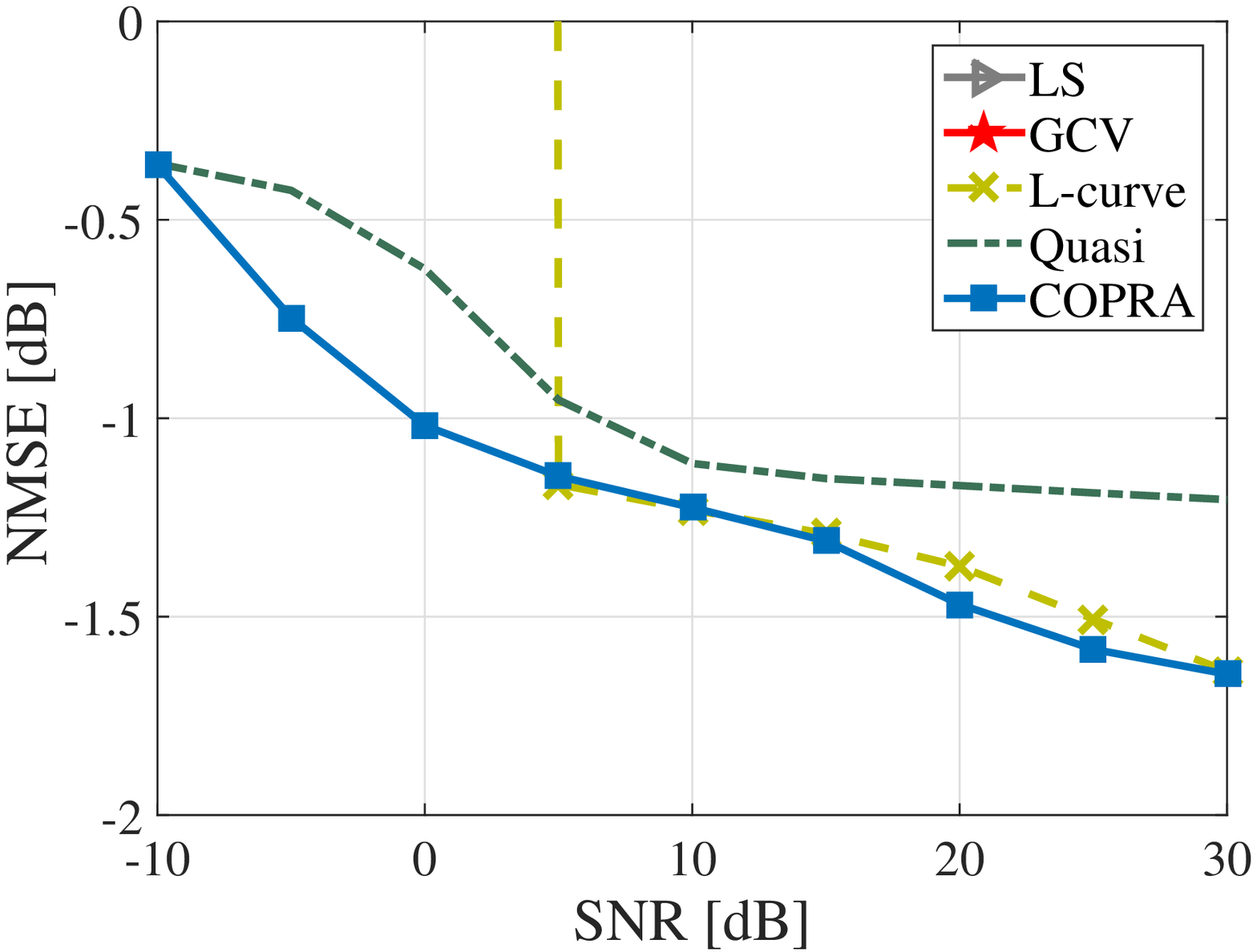}}	
    \subfigure[Baart (CN = 4 $\times 10^{17}$).]{\label{fig:out3}\includegraphics[width=2.3in]{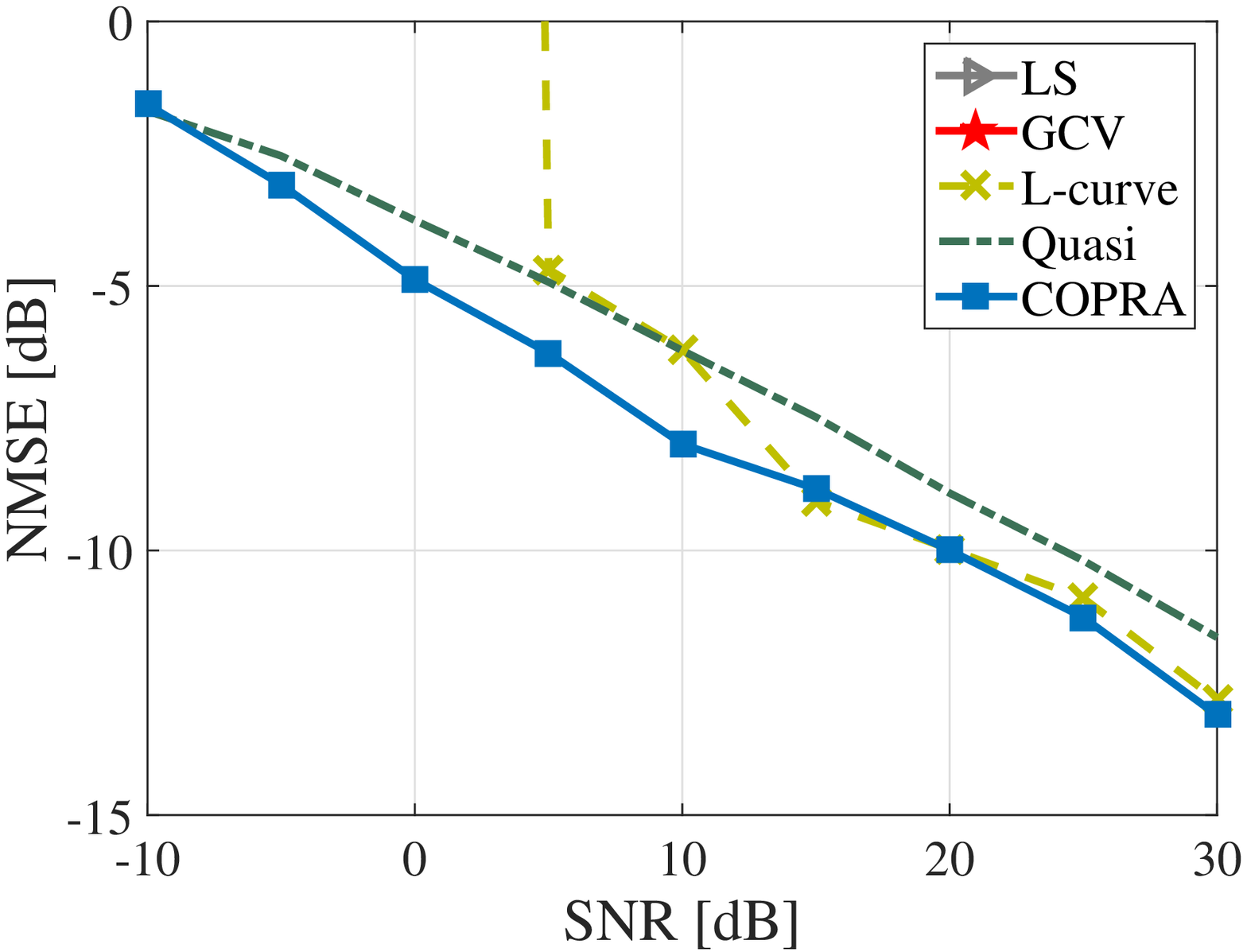}}
    \subfigure[Foxgood \ (CN = 2.4 $\times 10^{18}$).]{\label{fig:out3}\includegraphics[width=2.3in]{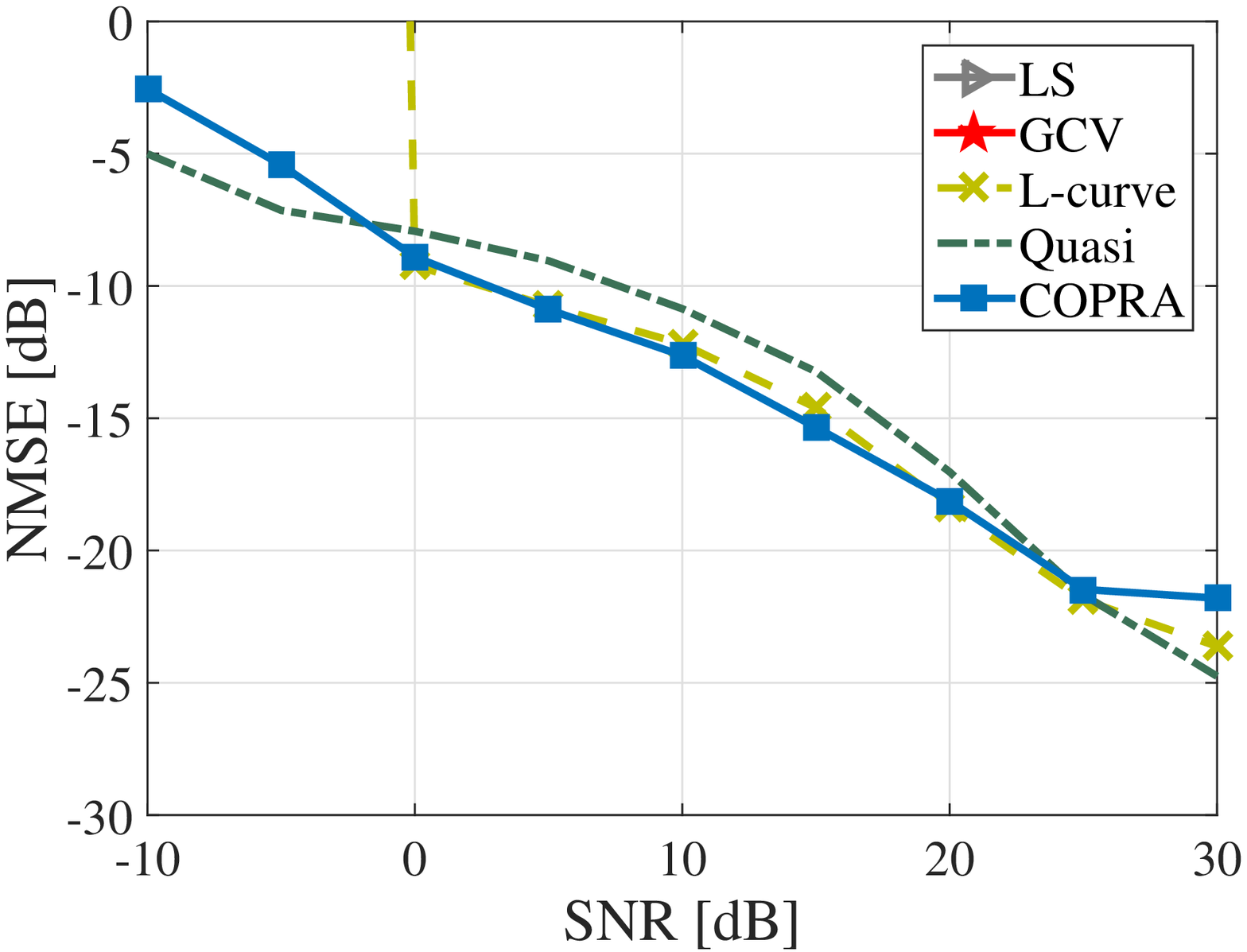}}

	\subfigure[I-Laplace \ (CN = 3.4 $\times 10^{33}$).]{\label{fig:out1}\includegraphics[width=2.3in]{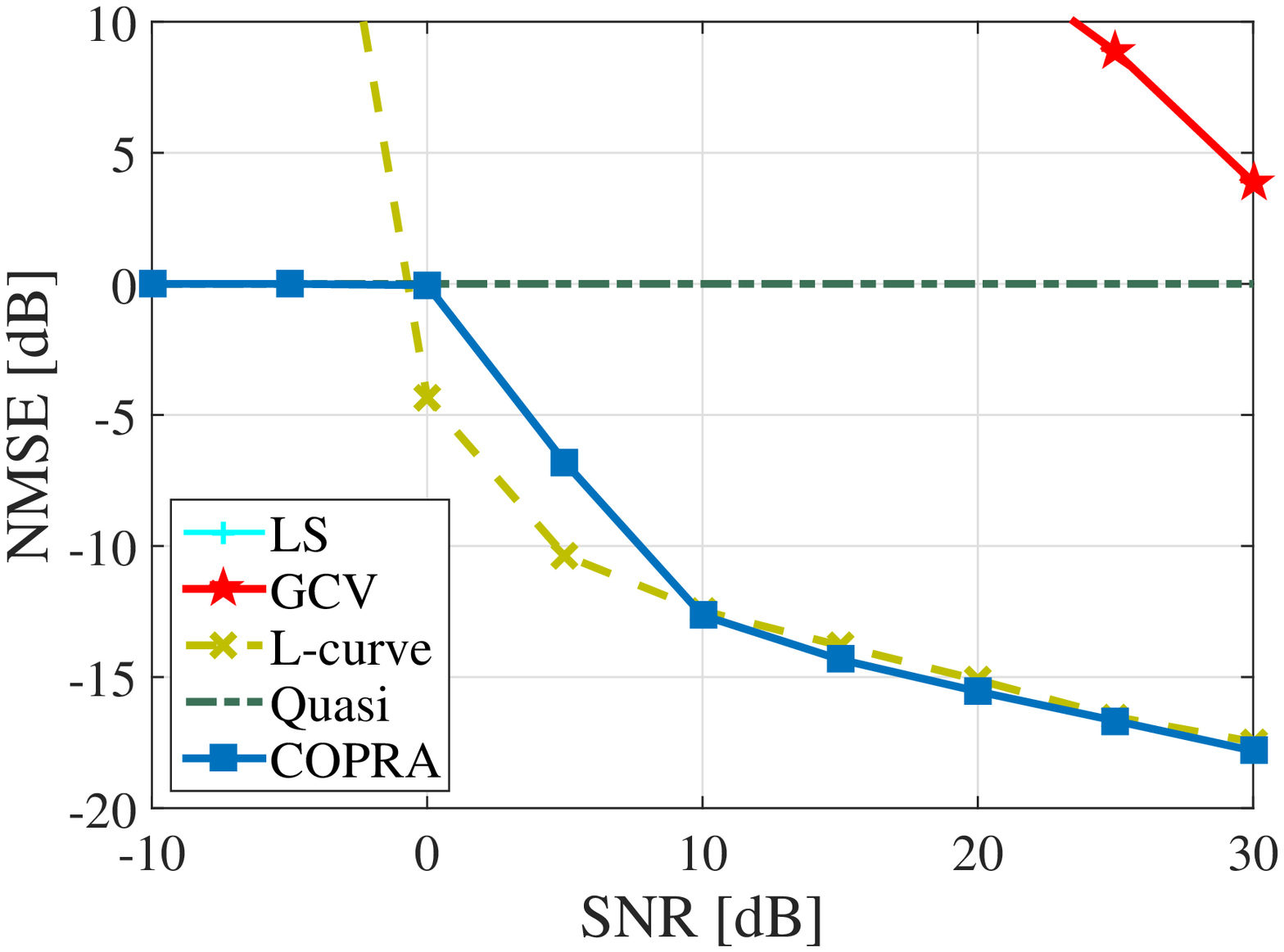}}
	\subfigure[Deriv2 \ (CN = 3 $\times 10^{3}$).]{\label{fig:out2}\includegraphics[width=2.3in]{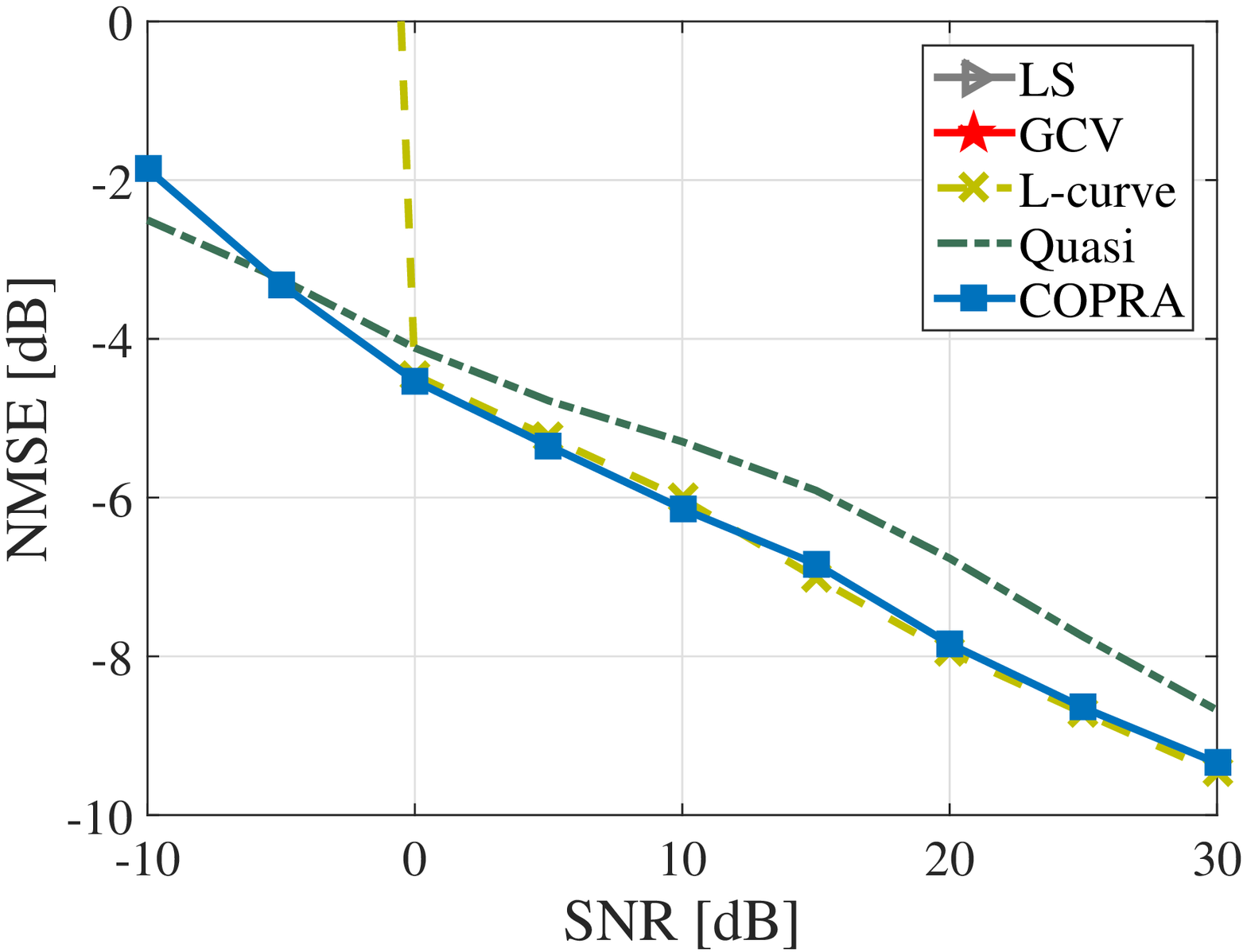}}  
	\subfigure[Shaw \ (CN = 2 $\times 10^{18}$).]{\label{fig:out3}\includegraphics[width=2.3in]{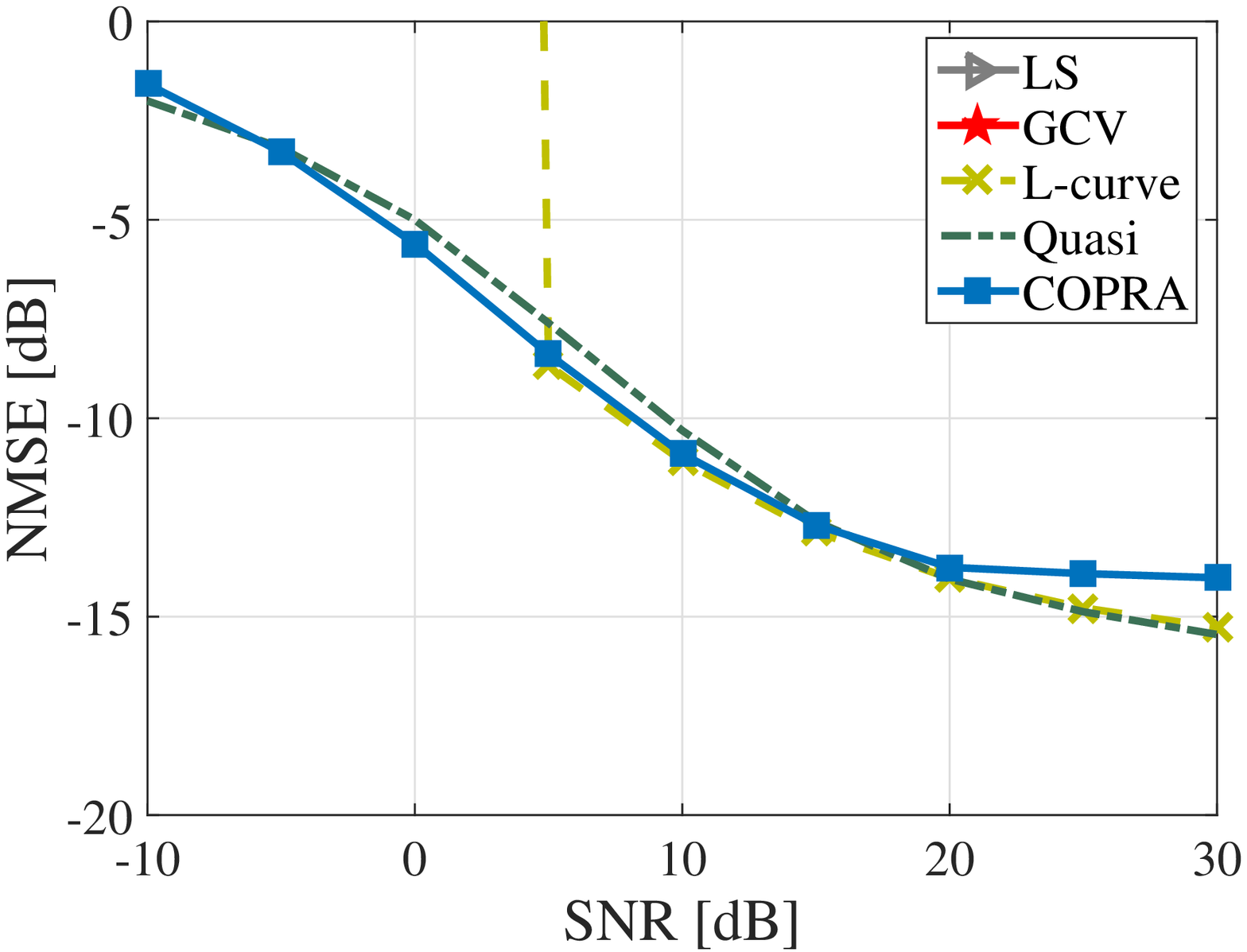}}				
	\caption{ Normalized mean-squared error (NMSE) [dB] vs SNR [dB] (CN $\equiv$ condition number). }
	\label{fig:ill-posed performace}
\end{figure*}

\emph{Experiment setup}: The performance of COPRA is compared with three benchmark regularization methods, the quasi-optimal, the GCV, the L-curve in addition to the LS. The performance is evaluated in terms of normalized MSE (NMSE); that is the MSE normalized by $\lVert \xv_{0} \lVert_{2}^{2}$. Noise is added to the vector $\Am \xv_{0}$ according to a certain signal-to-noise-ratio (SNR) defined as SNR $\triangleq \lVert \Am\xv_{0}\rVert_2^{2}/n \sigma_{\zv}^{2}$ to generate $\yv$. The performance is presented as the NMSE (in dB) (NMSE in dB  = $10\log_{10}$ $\left( \text{NMSE}\right)$) versus SNR (in dB) and is evaluated over $10^{5}$ different noise realizations at each SNR value. Since some regularization methods provide a high unreliable NMSE results that hinder the good visualization of the NMSE, we set different \emph{upper thresholds} for the vertical axis in the results sub-figures.

Fig.~\ref{fig:ill-posed performace} shows the results for all the selected 9 problems. Each sub-figure quote the condition number (CN) of the problem's matrix. The NMSE curve for some methods disappears in certain cases. This indicates extremely poor performance for these methods such that they are out of scale. For example, LS does not show up in all the tested scenarios, while the other benchmark methods disappear in quite a few cases. 

Generally speaking, It can be said that an estimator offering NMSE above 0~dB is not robust and is worthless. From Fig.~\ref{fig:ill-posed performace}, it is clear that COPRA offers the highest level of robustness among all the methods as it is the only approach whose NMSE performance remains below 0~dB in almost all cases. Comparing the NMSE over the range of the SNR values in each problem, we find \emph{on average} that COPRA exhibits the lowest NMSE amongst all the methods in 8 problems (the first 8 sub-figures). Considering all the problems, the closest contender to COPRA is the quasi method. However, this method and the remaining methods show lack of robustness in certain situations as evident by the extremely high NMSE. 
\normalsize
\begin{figure}[h!]
    \centering
  \centerline{\includegraphics[width= 3.8in]{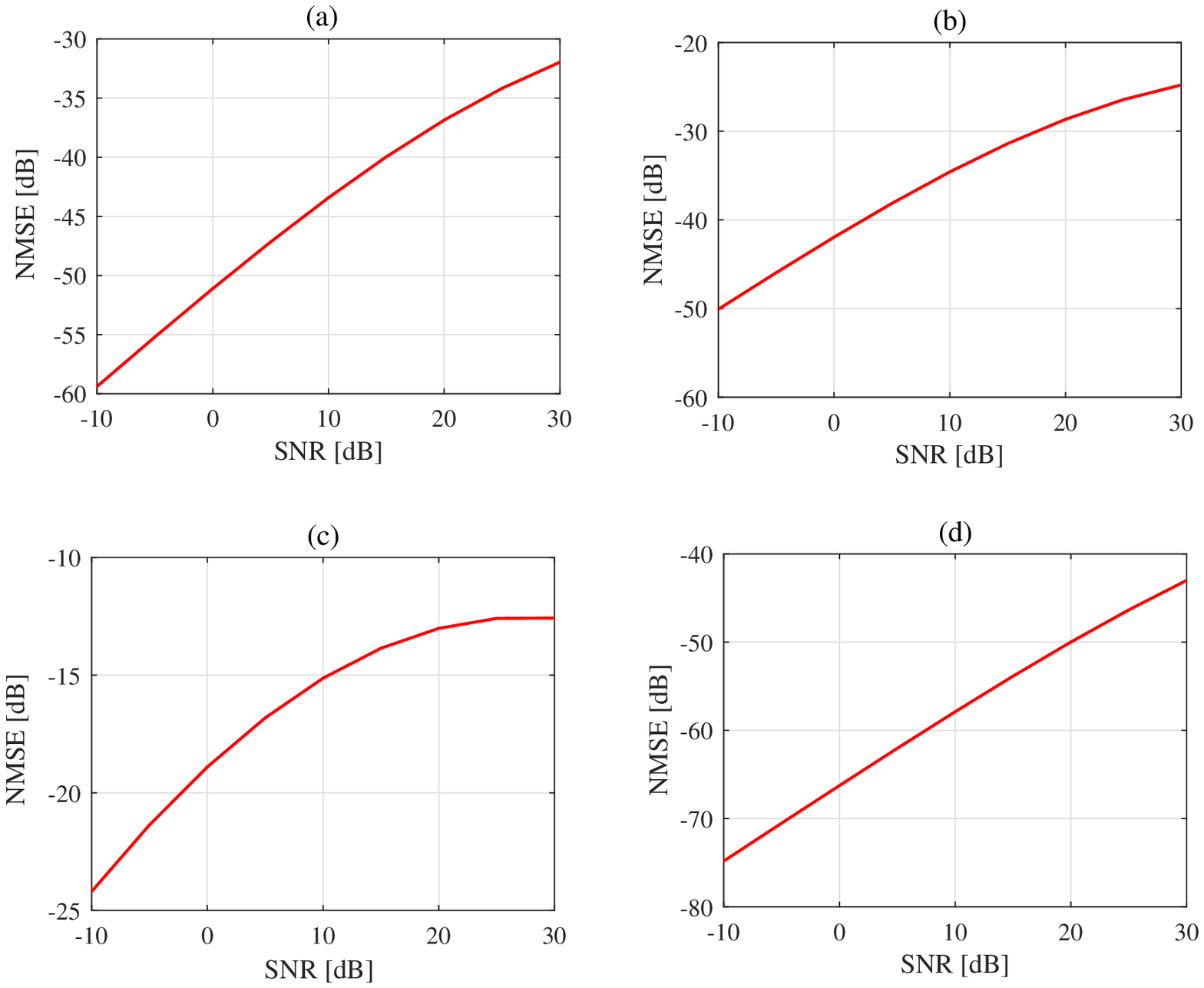}}
\caption{ NMSE [dB] vs SNR [dB] between (\ref{eq:eta min 2}) and (\ref{eq:eta min 3}) for a various $\Am$. (a) Wing problem. (b) Heat problem. (c) Foxgood problem. (d) Deriv2 problem. }
\label{fig:bounds}
\end{figure}

In Fig.~\ref{fig:bounds}, we provide the NMSE for the approximation of the perturbation bound expression in (\ref{eq:eta min 2}) by (\ref{eq:eta min 3}) for a selected ill-posed problem matrices. The two expressions are evaluated at each SNR using the suboptimal regularizer in (\ref{eq:gamma min approx}). The sub-figures show that the NMSE of the approximation is extremely small (below -20 dB in most cases) and that the error increases as the SNR increases. The increase of the approximation error with the SNR is discussed in Appendix~\ref{Apen error}.
\vspace{-9pt}
\subsection{Rank Deficient Matrices}
In this scenario, a rank deficient random matrix $\Am$ is considered. This is the case where $\lVert \Sigmam_{n_{2}} \rVert = 0$ in (\ref{eq:sigma parti}). This theoretical test is meant to illustrate the robustness of COPRA. 

\emph{Experiment setup}: The matrix $\Am$ is generated as a random matrix that satisfies $\Am = \frac{1}{50}\Bm \Bm^{T}$, where $ \Bm  \in \mathbb{R}^{50\times45}, B_{ij} \sim \mathcal{N}\left(0, 1\right)$. The elements of $\xv_{0}$ are chosen to be Gaussian i.i.d. with zero mean unit variance, and i.i.d. with uniform distribution in the interval $(0, 1)$. Results are obtained as an average over $10^{5}$ different realizations of $\Am$, $\xv_{0}$, and $\zv$.

From Fig.~\ref{fig:rank deficinet}(a), we observe that when the elements of $\xv_{0}$ are Gaussian i.i.d. COPRA outperforms all the benchmark regularization methods. In fact, COPRA is the only approach that provides a NMSE below 0 dB overall the SNR range while other algorithms are providing a very high NMSE. The same behavior can be observed when the elements of $\xv_{0}$ are uniformly distributed as Fig.~\ref{fig:rank deficinet}(b) shows. Finally, the performance of the LS is above 250~dB for both cases.
\begin{figure}
	 \centering	 	
	\subfigure[$\xv_{0} \sim \mathcal{N}(\bm{0}, \Id)$ with i.i.d. elements.]{\label{fig:out3}\includegraphics[width=2.2in]{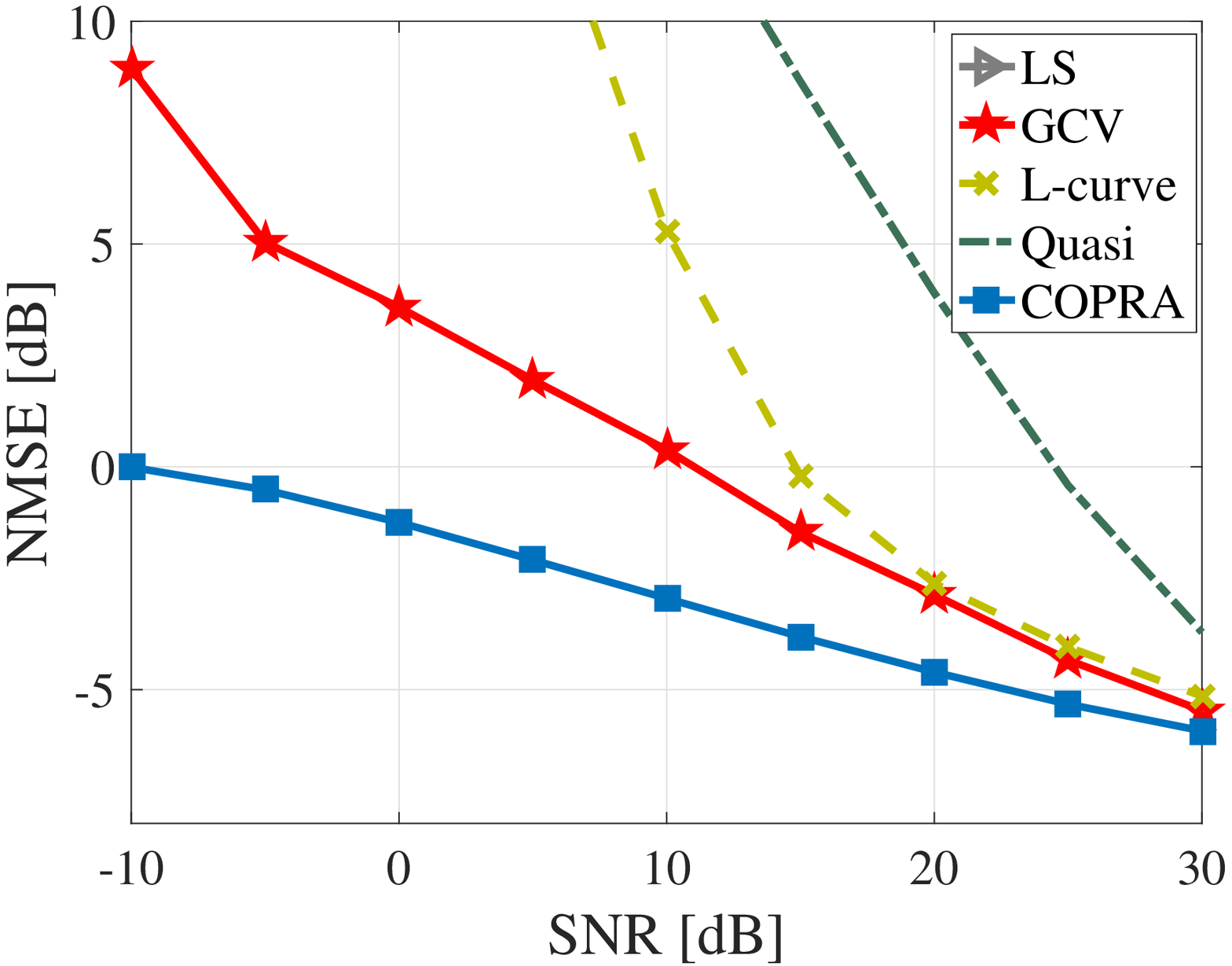}}
	\subfigure[The elements of $\xv_{0}$ are i.i.d. with uniform distribution in the interval $(0, 1)$.]{\label{fig:out3}\includegraphics[width=2.2in]{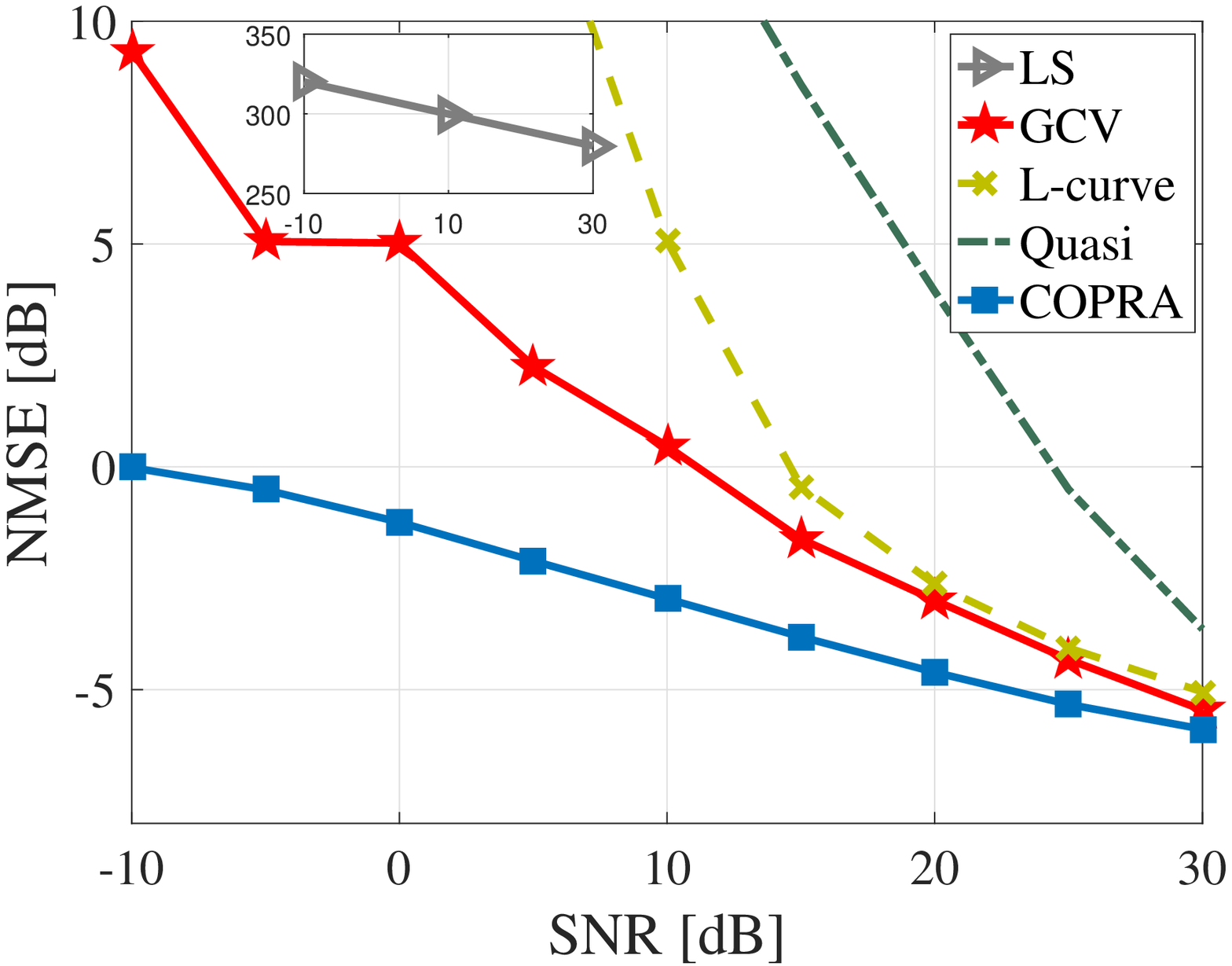}}
			
	\caption{Performance comparison when $\Am = \frac{1}{50}\Bm\Bm^{T}$, where $ \Bm  \in \mathbb{R}^{50\times45}, B_{ij} \sim \mathcal{N}(0, 1)$. }
	\label{fig:rank deficinet}
\end{figure}
\vspace{-9pt}
\subsection{Image Restoration}
The tomo example in Section~\ref{sec:Results}-A and Fig.~\ref{fig:tomonmse} discusses the NMSE of the tomography inverses problem solution. In this subsection, we present visual results for the restored images.

\emph{Experiment setup}: The elements of $\Am \xv_{0} $ are a representative of a line integrals along direct rays that penetrate a rectangular field. This field is discretized into $n^{2}$ cells, and each cell with its own intensity is stored as an element in the image matrix $\Mm$. Then, the columns of $\Mm$ are stacked into $\xv_{0}$. On the other hand, the entries of $\Am$ are generated as 
\begin{equation}
a_{ij} = 
\begin{cases}
&l_{ij},  \  \  \text{pixel}_{j} \in \text{ray}_{i} \\
&0       \  \hspace{10pt} \text{else},
\end{cases}  \nonumber
\end{equation}
where $l_{ij}$ is the length of the $i$'th ray in pixel $j$. Finally, the rays are placed randomly. A noise with SNR equal to 30 dB is added to the image of size $16 \times 16$ and the performance is evaluated as an average over $10^{6}$ noise and $\Am$ realizations.

In Fig.~\ref{fig:image tomo results}, we present the original image, the received image, and the performance of the methods. Fig.~\ref{fig:image tomo results} demonstrates that COPRA outperforms all methods through providing a clear image that is very close to the original image. This also appear when we compare the peak signal-to-noise ratio (PSNR) of the algorithms as in Table.~\ref{tab:psnr} with COPRA having the largest PSNR among them. Moreover, algorithm such GCV provides an unreliable result, while L-curve and quasi fail to restore the internal parts clearly, especially those who have colors close to the image background color.
\begin{figure}
	  \centering	 	
	\subfigure[Original image.]{\label{fig:tomonmse1}\includegraphics[width=.95in]{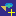}}
	\subfigure[Recieved image.]{\label{fig:tomonmse2}\includegraphics[width=.95in]{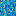}}
	\subfigure[COPRA image.]{\label{fig:tomonmse3}\includegraphics[width=.95in]{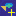}}
	\subfigure[L-curve image.]{\label{fig:tomonmse4}\includegraphics[width=.95in]{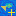}}
	\subfigure[GCV image.]{\label{fig:tomonmse5}\includegraphics[width=.95in]{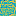}}
	\subfigure[Quasi image.]{\label{fig:tomonmse6}\includegraphics[width=.95in]{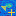}}
	\caption{Image tomography restoration. }
	\label{fig:image tomo results}
\end{figure}

\begin{center}
\captionof{table}{Algorithms PSNR at SNR=30 }
\begin{tabular}[b]{| l || l |l| l | l |}
  \hline
  \rowcolor{lightgray}
    \hline
Method   & COPRA  & L-curve  & GCV   & Quasi  \\\hline
PSNR      & \textbf{29.8331}  & 13.6469  & 10.5410  & 15.9080 \\\hline
  \end{tabular} 
\label{tab:psnr} 
\end{center}
\vspace{-10pt}
\subsection{Average Runtime}
In Fig.~\ref{fig:run time}, we plot the average runtime for each method against the SNR as calculated in the simulation. The figure is a good representation for the runtime of all the problems (no significant runtime variation between problems has been seen). The figure shows that COPRA is the fastest algorithm as it has the lowest runtime in compare to all benchmark methods. 
\vspace{-7pt}
\begin{figure}[h]
    \centering
  \centerline{\includegraphics[width= 2.3in]{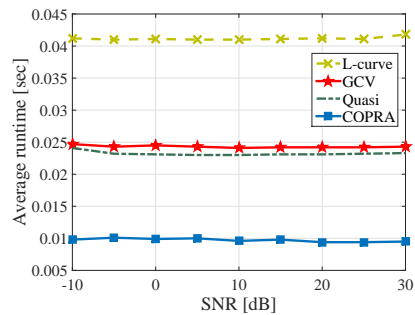}}
\caption{Average runtime.}
\label{fig:run time}
\end{figure}
\section{Conclusion}
\label{sec:conclusion}
In this work, we developed a new approach to find the regularization parameter for linear discrete ill-posed problems. Due to the challenging singular-value structure for such problems, many regularization approaches fail to provide a good stabilize solution. In the proposed approach, the singular-value structure of the model matrix is modified by allowing an artificial perturbation into it. To maintain the fidelity of the model, an upper bound constraint on the perturbation is allowed. The proposed approach minimizes the worst-case residual error of the estimator and selects the perturbation bound in a way that approximately minimizes the MSE. As a result, the approach combines the simplicity of the least-squares criterion with the robustness of the MSE based estimation. The regularization parameter is obtained as a solution of a non-linear equation in one unknown variable. Simulation results demonstrate that the proposed approach outperforms a set of benchmark regularization methods.
\vspace{-15pt}
\appendices
\numberwithin{equation}{section}
\section{Error Analysis}
\label{Apen error}

In this appendix, we analyze the error of the approximation that is made to obtain (\ref{eq:gamma min approx}). To simpify the analysis, we will consider the case when the approximation is applied directly to the MSE function in (\ref{eq:MSE2}). Let us start by defining $\Hm\triangleq {\Sigmam}^{2k} \left({\Sigmam}^2 + \rho \Id_n \right)^{-p}$, whose diagonal entries can be written as
\begin{equation}
\label{eq:diagonal}
h_{ii} = \frac{\sigma_i^{2k}}{\left(\sigma_i^2+\rho\right)^p} ; \  i=1,2,\cdots, n.
\end{equation}
Note that in our case $k=0$ and $p=2$ for the diagonal matrix inside the trace function of the second term in (\ref{eq:MSE2}). However, we will use these two variables to obtain the error expression for the general case, then we will substitute for $k$ and $p$. By using the inequalities in [\cite{wang1986trace}, eq.(5)], we obtain
\begin{equation}
\label{eq:eneq 1}
\lambda_\text{min}(\Rm_{\xv_{0}}) \text{Tr}\left( \Hm \right)
 \leq
 \text{Tr}\left( \Hm \Vm^T\Rm_{\xv_{0}}\Vm \right)
\leq
\lambda_\text{max}(\Rm_{\xv_{0}}) \text{Tr}\left(\Hm \right).
\end{equation}
Similarly, we can write
\begin{equation}
\label{eq:eneq 2}
\lambda_\text{min}\left(\Hm\right) \text{Tr}\left( \Rm_{\xv_{0}} \right)
 \leq
 \text{Tr}\left( \Hm \Vm^T\Rm_{\xv_{0}}\Vm \right)
\leq
\lambda_\text{max}\left(\Hm\right) \text{Tr}\left(\Rm_{\xv_{0}} \right).
\end{equation}
Since $\Hm$ is diagonal, $\lambda_\text{min}\left(\Hm\right) = \text{min}\left(\text{diag}(\Hm)\right)$ and $\lambda_\text{max}\left(\Hm\right) = \text{max}\left(\text{diag}(\Hm)\right)$. Now, let us define the normalized error of the approximation as
\begin{equation}
\label{eq:error def}
\varepsilon = \frac{\text{Tr}\left( \Hm \Vm^T\Rm_{\xv_{0}}\Vm \right) - \frac{1}{n}\text{Tr}\left( \Hm \right) \text{Tr}\left( \Rm_{\xv_{0}} \right) }
 {\frac{1}{n}\text{Tr}\left( \Hm \right) \text{Tr}\left( \Rm_{\xv_{0}} \right)}.
\end{equation}
Note that this is not the standard way of defining the normalized error. Typically, the error $\varepsilon$ is normalized by the true quantity, i.e., $\text{Tr}\left( \Hm \Vm^T\Rm_{\xv_{0}}\Vm \right)$. However, this way of defining the error is found to be more useful in carrying out the following error analysis. Based on (\ref{eq:error def}), we see that $|\varepsilon| \geq 1$ indicates an inaccurate approximation. Although at the end it depends totally on the application, we will adopt $|\varepsilon| < 1$ as the reference for evaluating the accuracy of the approximation. In fact, we can observe from (\ref{eq:error def}) that $|\varepsilon| = 1$ indicates that $\frac{1}{n}\text{Tr}\left( \Hm \right) \text{Tr}\left( \Rm_{\xv_{0}} \right) = 0.5  \ \text{Tr}\left( \Hm \Vm^T\Rm_{\xv_{0}}\Vm \right) $. To this end, we will derive two bounds based on \eqref{eq:eneq 1} and \eqref{eq:eneq 2}. Then, we will combine them to obtain the final error bound.

\emph{Absolute Error Bound Based on \eqref{eq:eneq 1}}:
Subtracting $\frac{1}{n}\text{Tr}\left( \Hm \right) \text{Tr}\left( \Rm_{\xv_{0}} \right)$ from \eqref{eq:eneq 1} and dividing by the same quantity, we obtain
\begin{equation}
\label{eq:eneq 1 bound}
\frac{\lambda_\text{min}(\Rm_{\xv_{0}})}{\lambda_\text{avg}(\Rm_{\xv_{0}})}-1
 \leq
\varepsilon
\leq
\frac{\lambda_\text{max}(\Rm_{\xv_{0}})}{\lambda_\text{avg}(\Rm_{\xv_{0}})}-1,
\end{equation}
where $\lambda_\text{avg}(\Rm_{\xv_{0}}) \triangleq \frac{1}{n} \text{Tr}\left( \Rm_{\xv_{0}}\right)$. Thus, $|\varepsilon|$ can be bounded by a positive quantity according to
\begin{equation}
\label{eq:eneq 1 abs bound}
|\varepsilon_{x}|
 \leq \mu_x = \max
 \left[ 1 - \frac{ \lambda_\text{min}(\Rm_{\xv_{0}}) }{\lambda_\text{avg}(\Rm_{\xv_{0}})},
 \frac{ \lambda_\text{max}(\Rm_{\xv_{0}}) }{\lambda_\text{avg}(\Rm_{\xv_{0}})} -1
  \right].
\end{equation}

\emph{Absolute Error Bound Based on \eqref{eq:eneq 2}}:
Starting from \eqref{eq:eneq 2}, and by applying the same approach used to obtain \eqref{eq:eneq 1 abs bound}, we derive the second bound as
 \begin{equation}
\label{eq:eneq 2 abs bound 0}
|\varepsilon_{a}|
 \leq \mu_a = \max
 \left[ 1 - \frac{ \lambda_\text{min}(\Hm) }{\lambda_\text{avg}(\Hm)},
 \frac{ \lambda_\text{max}(\Hm) }{\lambda_\text{avg}(\Hm)} -1
  \right].
\end{equation}
By using \eqref{eq:diagonal} we can transform \eqref{eq:eneq 2 abs bound 0} to
\begingroup
    \fontsize{9.35pt}{9.35pt} 
 \begin{align}
\label{eq:eneq 2 abs bound}
|\varepsilon_{a}|
 \leq \mu_a & = \max
 \left[1 -
 \frac{ \underset{i}{\min}\left[ \frac{\sigma_i^{2k}}{(\sigma_i^2 +\rho)^p}\right] }
 { \frac{1}{n}\sum_{i=1}^{n} \frac{\sigma_i^{2k}}{(\sigma_i^2 +\rho)^p} },
 \frac{ \underset{i}{\max}\left[ \frac{\sigma_i^{2k}}{(\sigma_i^2 +\rho)^p}\right]}
 { \frac{1}{n}\sum_{i=1}^{n} \frac{\sigma_i^{2k}}{(\sigma_i^2 +\rho)^p} }
  -1 \right]\nonumber\\
  & i=1, 2,\cdots, n.
 \end{align}
 \endgroup
Note that the bound $\mu_x$ depends only on $\Rm_{\xv_{0}}$, while $\mu_a$ depends both the singular values of $\Am$ and the unknown regularizaer $\rho$. As in our case, $k=0$ and $p=2$, and therefore, (\ref{eq:eneq 2 abs bound}) can be simplified to
\begingroup
    \fontsize{9.35pt}{9.35pt} 
 \begin{align}
\label{eq:eneq 2 abs bound k0}
|\varepsilon_{a}|
 \leq \mu_{a} & = \max
 \left[1 -
 \frac{ \frac{1}{(\sigma_1^2 +\rho)^2} }
 { \frac{1}{n}\sum_{i=1}^{n} \frac{1}{(\sigma_i^2 +\rho)^2} },
 \frac{ \frac{1}{(\sigma_n^2 +\rho)^2}}
 { \frac{1}{n}\sum_{i=1}^{n} \frac{1}{(\sigma_i^2 +\rho)^2} }
  -1 \right]
 \end{align}
 \endgroup  

\emph{Combined Bound}:
By combining \eqref{eq:eneq 1 abs bound} and \eqref{eq:eneq 2 abs bound k0}, we obtain the final bound on the absolute error as
 \begin{equation}
\label{eq:eneq combined bound}
|\varepsilon|  \leq \mu =  \min \left( \mu_x, \mu_a \right ).
\end{equation}
From \eqref{eq:eneq 1 abs bound}, \eqref{eq:eneq 2 abs bound k0}, and  \eqref{eq:eneq combined bound}, we notice that the bound is a minimum of two independent bounds. We will analyze each bound separately and then we will derive a conclusion concerning the overall error bound.
\vspace{-10pt}
\subsection{Analysis of $\mu_x$} When $\xv_{0}$ is deterministic, $\lambda_\text{min}(\Rm_{\xv_{0}}) = 0$, $\lambda_\text{max}(\Rm_{\xv_{0}}) = || \xv_{0} ||_{2}^{2}$ and $\lambda_\text{avg}(\Rm_{\xv_{0}}) = \frac{1}{n}|| \xv_{0} ||_{2}^{2}$. By substituting in \eqref{eq:eneq 1 abs bound}, we obtain
\begin{equation}
\label{eq:eneq 1 abs bound deterministic}
\mu_x = \max
 \left[1, n-1\right] = n-1.
\end{equation}
On the other hand, when $\xv_{0}$ is stochastic with i.i.d. elements, $\lambda_\text{min}(\Rm_{\xv_{0}})= \lambda_\text{avg}(\Rm_{\xv_{0}})= \lambda_\text{min}(\Rm_{\xv_{0}}) =\sigma_{\xv_{0}}^{2}$, and as a result 
\begin{equation}
\label{eq:eneq 1 abs bound iid}
\mu_x = \max
 \left[0, 0\right] =0,
\end{equation}
which means based on (\ref{eq:eneq combined bound}) that the approximation is exact regardless of the contribution of the error from $\mu_a$. When $\xv_{0}$ deviate from being i.i.d., it will be very difficult to obtain a value for $\mu_x$. Therefore, and since no previous knowledge about $\xv_{0}$ is assumed in this paper, it seems that this bound is very loose for a general $\xv_{0}$ and we should rather rely on $\mu_a$ to tighten and evaluate the bound of the error as we will discuss. Thus, the modified bound, which can be larger than the actual bound, is given by
\begin{equation}
\label{eq:eneq combined bound modified}
|\varepsilon|  \leq \mu =  \mu_a.
\end{equation}
\vspace{-30pt}
\subsection{Analysis of $\mu_a$}
By taking the derivative of each of the two terms inside (\ref{eq:eneq 2 abs bound k0}) w.r.t. $\rho$, we can easily prove that the two functions are decreasing in $\rho$. This means that we can obtain the two extreme error bounds (the largest and the smallest possible value of the absolute error) by analyzing the two extreme SNR scenarios, i.e., the high SNR regime and the low SNR regime.
\subsubsection{Analysis for the low SNR regime}
In the extreme low SNR regime, we have $\rho\rightarrow \infty$, and therefore, we can obtain the minimum bound on the absolute error. Based on \eqref{eq:diagonal} we can write 
\begin{equation}
\label{eq:D diag gammaInf}
h_{ii}  = \frac{1}{(\sigma_i^2 + \rho)^2}\rightarrow \frac{1}{\rho^2 } ; \  i=1,2,\cdots, n.
\end{equation}
Consequently, \eqref{eq:eneq 2 abs bound k0} will boil down after some manipulations to
\begin{align}
\label{eq:eneq 2 abs bound exp  large gamma 2}
|\varepsilon_{a}^{l}|
 \leq \mu_{a}^{l} & = \max
 \left[1 -
 \frac{  \frac{1}{\rho^{2}} }
 { \frac{1}{n}\sum_{i=1}^{n} \frac{1}{\rho^{2}} },
 \frac{  \frac{1}{\rho^{2}}}
 { \frac{1}{n}\sum_{i=1}^{n} \frac{1}{\rho^{2}} }
  -1 \right]\nonumber\\   
  & = \max
 \left[1 -
 \frac{n}
 { \sum_{i=1}^{n}1 },
 \frac{ n}
 { \sum_{i=1}^{n} 1 }
  -1 \right] = 0.
\end{align}
The result in (\ref{eq:eneq 2 abs bound exp  large gamma 2}) indicates that as the SNR decreases, the approximation becomes more accurate. In the extreme low SNR regime (i.e., $\rho \to \infty$), the absolute error is zero and the approximated term is exactly equal to the original one. 

\subsubsection{Analysis for the high SNR regime}
At the extremely high SNR, $\rho$ is sufficiently small and thus doing such analysis allows us to obtain the upper worst case possible value for $\varepsilon$.  

Since the main objective of this paper is to provide an estimator that minimizes the MSE, and based on the result proven in \cite{hoerl1970ridge}, there always exists a positive regularizer $\rho > 0$, such that the regularized estimator offers lower MSE than the OLS estimator. This implies also for well-conditioned problems. However, if the condition number is too small, both the regularization parameter and the corresponding improvement in the MSE performance comparing to that of the OLS are too small. Therefore, we conclude that for the extreme high SNR, $\rho$ converges to a minimum value $\rho_{\text{min}}$. In what follow, we find a lower bound expression for $\rho_{\text{min}}$ and then examine the absolute error in this value.

Starting from the definition of the SNR, we can write 
\begin{equation}
\label{eq:SNR}
\text{SNR} = \frac{|| \Am \xv_{0} ||_{2}^{2}}{|| \zv ||_{2}^{2}}.
\end{equation}
Applying the SVD of $\Am$ to (\ref{eq:SNR}) then doing some algebraic manipulations, we obtain
\begin{equation}
\label{eq:SNR2}
\text{SNR} = \frac{\text{Tr}\left(\Vm \Sigmam^{2} \Vm^{T} \Rm_{\xv_{0}}\right)}{n  \sigma_{\zv}^{2}},
\end{equation}
where $\Rm_{\xv_{0}}= \xv_{0} \xv_{0}^{T}$. Now, using (\ref{eq:SNR2}) with the suboptimal regularizer $\rho_{\text{o}}$ expression in (\ref{eq:gamma min approx}), we can write
\begin{equation}
\label{eq:regularizer based SNR}
\rho_{\text{o}} = \frac{1}{\text{SNR}} \frac{\text{Tr}\left(\Vm \Sigmam^{2} \Vm^{T} \Rm_{\xv_{0}}\right)}{\text{Tr}\left(\Rm_{\xv_{0}}\right)}.
\end{equation}
From (\ref{eq:regularizer based SNR}) we deduce that for a given $\Am$ and $\xv_{0}$, the minimum achievable suboptimal regularizer $\rho_{\text{min}}$ dependents on the maximum SNR (i.e., $\text{SNR}_{\text{max}}$). That is
\begin{equation}
\label{eq:regularizer based SNR2}
\rho_{\text{min}} = \frac{1}{\text{SNR}_{\text{max}}} \frac{\text{Tr}\left(\Vm \Sigmam^{2} \Vm^{T} \Rm_{\xv_{0}}\right)}{\text{Tr}\left(\Rm_{\xv_{0}}\right)}.
\end{equation}
Give the nature of the ill-posed problems and there singular values behavior as Fig~\ref{fig:sv decay} shows, we will partition $\Sigmam$ and $\Vm$ into to two sub-matrices (same as in Section \ref{choosing optimal bound}) and then approximate (\ref{eq:regularizer based SNR2}) as  
\begin{equation}
\label{eq:regularizer based SNR3}
\rho_{\text{min}} \approx \frac{1}{\text{SNR}_{\text{max}}} \frac{\text{Tr}\left(\Vm_{n_{1}} \Sigmam_{n_{1}}^{2} \Vm_{n_{1}}^{T} \Rm_{\xv_{0}}\right)}{\text{Tr}\left(\Rm_{\xv_{0}}\right)},
\end{equation}
where $\Sigmam_{n_{1}}^{2} =  \text{diag}\left({\sigma}_{1}^{2}, \dots, {\sigma}_{n_{1}}^{2}  \right)$. The value of $\rho_{\text{min}}$ in (\ref{eq:regularizer based SNR3}) can be bounded by 
\begingroup
    \fontsize{9.35pt}{9.35pt} 
\begin{equation}
\label{eq:regularizer based SNR3 bounds}
\frac{{\sigma}_{n_{1}}^{2}}{\text{SNR}_{\text{max}}} \frac{\text{Tr}\left( \Vm_{n_{1}}^{T} \Rm_{\xv_{0}} \Vm_{n_{1}}\right)}{\text{Tr}\left(\Rm_{\xv_{0}}\right)}  \leq  \rho_{\text{min}} \leq \frac{{\sigma}_{1}^{2}}{\text{SNR}_{\text{max}}} \frac{\text{Tr}\left( \Vm_{n_{1}}^{T} \Rm_{\xv_{0}} \Vm_{n_{1}}\right)}{\text{Tr}\left(\Rm_{\xv_{0}}\right)}
\end{equation}
\endgroup
Since we are considering the worst case upper bound for the absolute error, and given that this absolute error increases as $\rho$ decreases, we will consider the lower bound of $\rho_{\text{min}}$ as in (\ref{eq:regularizer based SNR3 bounds}). Moreover, and based on the unitary matrix property and the partitioning of $\Vm$, we can obtain a lower bound for the lower bound in (\ref{eq:regularizer based SNR3 bounds}) as 
\begin{equation}
\label{eq:regularizer based SNR3 bounds 2}
\rho_{\text{min}} \geq  \frac{{\sigma}_{n_{1}}^{2}}{\text{SNR}_{\text{max}}} \frac{\text{Tr}\left( \Vm_{n_{1}}^{T} \Rm_{\xv_{0}} \Vm_{n_{1}}\right)}{\text{Tr}\left(\Rm_{\xv_{0}}\right)} \geq \frac{{\sigma}_{n_{1}}^{2}}{\text{SNR}_{\text{max}}} \frac{\text{Tr}\left(\Rm_{\xv_{0}} \right)}{\text{Tr}\left(\Rm_{\xv_{0}}\right)}.
\end{equation}
Thus, a lower bound for $\rho_{\text{min}}$ (i.e., $\rho_{\text{min}}^{l}$) can be written as
\begin{equation}
\label{eq:regularizer based SNR3 bounds 3}
\rho_{\text{min}}^{l} = \frac{{\sigma}_{n_{1}}^{2}}{\text{SNR}_{\text{max}}}.
\end{equation}
Now, we are ready to study the behavior of the error in the high SNR regime. When $\rho \to \rho_{\text{min}}^{l}$, (\ref{eq:eneq 2 abs bound k0}) can be written as
\begingroup
    \fontsize{8.8pt}{8.8pt} 
 \begin{align}
\label{eq:eneq 2 abs bound exp  small gamma}
|\varepsilon_{a}^{h}|
 \leq  \max
 \left[1 -
 \frac{  \frac{1}{(\sigma_1^2 +\rho_{\text{min}}^{l})^2} }
 { \frac{1}{n}\sum_{i=0}^{n-1} \frac{1}{(\sigma_i^2 +\rho_{\text{min}}^{l})^2} },
 \frac{  \frac{1}{(\sigma_{n}^2 +\rho_{\text{min}}^{l})^2}}
 { \frac{1}{n}\sum_{i=0}^{n-1} \frac{1}{(\sigma_i^2 +\rho_{\text{min}}^{l})^2} }
  -1 \right]
 \end{align}
\endgroup 
To evaluate (\ref{eq:eneq 2 abs bound exp  small gamma}), we will relay on numerical results. Firstly, let us consider $\text{SNR}_{\text{max}}=40$ dB which is a realistic upper value in many signal processing and communication applications. Substituting this value in (\ref{eq:regularizer based SNR3 bounds 3}) we find that $\rho_{\text{min}}^{l}= 0.018 {\sigma}_{n_{1}}^{2}$. Now, substituting the result in (\ref{eq:eneq 2 abs bound exp  small gamma}), then evaluating the expression for the 9 ill-posed problems in Section~\ref{sec:Results}, we find that $|\varepsilon_{a}^{h}| \leq \mu_{a}^{h} \approx 1$. As this represents the worst case upper value for the absolute error, we conclude that
\begin{equation} 
\label{eq:fd}
|\varepsilon_{a}^{h}| \leq \varrho;  \  \   \  \varrho < 1.
\end{equation}
Finally, based on (\ref{eq:eneq combined bound}) and (\ref{eq:eneq combined bound modified}), and by combining (\ref{eq:eneq 2 abs bound exp  large gamma 2}) and (\ref{eq:fd}), we conclude that
\begin{equation}
\label{eq:finall error}
|\varepsilon|\in [0, \varrho];  \   \   \  \varrho <1.
\end{equation}
The conclusion in (\ref{eq:finall error}) indicates that $\frac{1}{n}\text{Tr}\left( \Hm \right) \text{Tr}\left( \Rm_{\xv_{0}} \right) =  q \ \text{Tr}\left( \Hm \Vm^T\Rm_{\xv_{0}}\Vm \right) $ where $q \in (0.5, 1]$.
\vspace{-5pt}
\section{Proof of theorem \ref{th2}}
\label{Apen A}
We are interested in studying the behavior of $\lim_{\rho_{\text{o}} \to  \epsilon} G\left(\rho_{\text{o}}\right)$, assuming that  $ \epsilon $ is sufficiently small positive number (i.e., $ \epsilon \to 0^{+} $), and $ \epsilon \ll \sigma_{i}^{2}$, $\forall i \in [1,n]$. Starting from COPRA function in (\ref{eq:IBPRfunction}), and by defining $\bv = \Um^{T}\yv$, we can write
\begin{align}
\label{eqC:MSE sum 1}
&G\left(\rho_{\text{o}}\right)
=
\sum_{i=1}^{n}\frac{\sigma_{i}^{2} b_{i}^{2}}{\left(\sigma_{i}^{2}+\rho_{\text{o}}\right)^{2}}\sum_{j=1}^{n_{1}}\frac{\left(\beta \sigma_{j}^{2}+\rho_{\text{o}}\right)}{\left(\sigma_{j}^{2}+\rho_{\text{o}}\right)^{2}} \nonumber\\
&-
\sum_{i=1}^{n}\frac{b_{i}^{2}}{\left(\sigma_{i}^{2}+\rho_{\text{o}}\right)^{2}}\sum_{j=1}^{n_{1}}\frac{\sigma_{j}^{2}\left(\beta\sigma_{j}^{2}+\rho_{\text{o}}\right)}{\left(\sigma_{j}^{2}+\rho_{\text{o}}\right)^{2}} 
+
\frac{n_{2}}{\rho_{\text{o}}}\sum_{i=1}^{n}\frac{\sigma_{i}^{2}b_{i}^{2}}{\left(\sigma_{i}^{2}+\rho_{\text{o}}\right)^{2}}.
\end{align}
Given how we choose $\epsilon$, Eq.~(\ref{eqC:MSE sum 1}) can be approximated as  
\begin{align}
\label{eqC:MSE sum 2}
G\left(\epsilon\right)
&\approx
\beta \sum_{i=1}^{n}\sigma_{i}^{-2} b_{i}^{2} \sum_{j=1}^{n_{1}}  \sigma_{j}^{-2}
-
\beta n_{1} \sum_{i=1}^{n}\sigma^{-4}_{i} b_{i}^{2} +\frac{n_{2}}{\epsilon} \sum_{i=1}^{n}\sigma_{i}^{-2} b_{i}^{2}.
\end{align}
Solving $G \left(\epsilon\right) = 0 $ from (\ref{eqC:MSE sum 2}) leads to the following root
\begin{equation}
\label{eqC:MSE sum 3}
\epsilon =  \frac{•n_{2} \sum_{i=1}^{n}\sigma_{i}^{-2} b_{i}^{2}}{\beta n_{1} \sum_{i=1}^{n}\sigma^{-4}_{i} b_{i}^{2} - \beta \sum_{i=1}^{n}\sigma_{i}^{-2} b_{i}^{2} \sum_{j=1}^{n_{1}}  \sigma_{j}^{-2}}.
\end{equation}
Now, we would like to know if the root define by (\ref{eqC:MSE sum 3}) is positive. For (\ref{eqC:MSE sum 3}) to be positive, the following relation should hold
\begin{equation}
\label{eqC:MSE sum 4}
n_{1} \sum_{i=1}^{n}\sigma_{i}^{-4} b_{i}^{2} \geq  \sum_{i=1}^{n}\sigma_{i}^{-2} b_{i}^{2} \sum_{j=1}^{n_{1}}  \sigma_{j}^{-2}.
\end{equation}  
Starting from the right hand side of (\ref{eqC:MSE sum 4}), and given that $\sigma_{1} \ge \sigma_{2} \geq \dotsi \geq \sigma_{n}$, we can bound this term as
\begin{equation}
\label{eqC:MSE sum 5}
\sum_{i=1}^{n}\sigma_{i}^{-2} b_{i}^{2} \sum_{j=1}^{n_{1}}  \sigma_{j}^{-2} \leq  \sigma_{n_{1}}^{-2} \sum_{i=1}^{n}\sigma_{i}^{-2} b_{i}^{2} \sum_{j=1}^{n_{1}}  1 = n_{1} \sigma_{n_{1}}^{-2} \sum_{i=1}^{n}\sigma_{i}^{-2} b_{i}^{2}. 
\end{equation}
On the other hand, given how we choose $n_{1}$ and $n_{2}$, we have
\begin{equation}
\label{eqC:MSE sum 6}
\sum_{i={n_{1}+1}}^{n}\sigma_{i}^{-2} \geq \sum_{i=1}^{n_{1}}\sigma_{i}^{-2},
\end{equation}
which can help us to bound the left hand side of (\ref{eqC:MSE sum 4}) as
\begin{equation}
\label{eqC:MSE sum 7}
n_{1} \sum_{i=1}^{n}\sigma_{i}^{-4} b_{i}^{2} \geq  n_{1}  \sigma_{n_{1}}^{-2} \sum_{i=1}^{n}\sigma_{i}^{-2} b_{i}^{2}.
\end{equation}
Now, from (\ref{eqC:MSE sum 5}) and (\ref{eqC:MSE sum 7}) we find that a lower bound for the left hand side of (\ref{eqC:MSE sum 4}) is equal to the upper bound for its right hand side. Then we can conclude from these two relations that
\begin{equation}
n_{1} \sum_{i=1}^{n}\sigma_{i}^{-4} b_{i}^{2} \geq  \sum_{i=1}^{n}\sigma_{i}^{-2} b_{i}^{2} \sum_{j=1}^{n_{1}}  \sigma_{j}^{-2}.
\end{equation}
Thus, $\epsilon$ is a positive root for the COPRA function in (\ref{eq:IBPRfunction}).
 
Now, we would like to know if $\epsilon$ can be considered as a value for our regularization parameter $\rho_{\text{o}}$. A direct way to prove that can be noted from the fact that having $\epsilon \ll \sigma_{i}^{2}$ \  $\forall i \in [1,n]$ will not provide any source of regularization to the problem. Hence, the RLS in (\ref{eq:R-LS}) converges to the LS in (\ref{eq:pure LS solution}).

As a remark, we can assume that the approximation in (\ref{eqC:MSE sum 2}) is uniform such that it does not affect the position of the roots. Thus, we can claim that this root is not coming from the negative region of the axis. In fact, we can easily prove that (\ref{eqC:MSE sum 1}) does not have a negative root that is close to zero. Thus, this root is not coming from the negative region as a result of the function approximation (i.e., perturbed root).

\bibliographystyle{IEEEbib}
\bibliography{References_j14}

\end{document}